\newtheorem{theorem}{Theorem}[]
\newtheorem{corollary}{Corollary}[]
\newtheorem{lemma}[]{Lemma}
\newtheorem{remark}{Remark}
\newcommand{\eqcite}[2]{[\citenum{#1}, eq. (#2)]}
\newcommand{\thref}[1]{\textbf{Theorem  \ref{#1}}}
\newcommand{\coref}[1]{\textbf{Corollary \ref{#1}}}
\newcommand{\leref}[1]{\textbf{Lemma \ref{#1}}}
\newcommand{\apref}[1]{A{\small PPENDIX} \ref{#1}}
\newcommand{\figref}[1]{Fig.~\ref{#1}}
\newcommand{\subfigref}[2]{Fig.~\ref{#1}\subref{#2}}
\newcommand{\secref}[1]{Section~\ref{#1}}
\newcommand{\tabref}[1]{{Table~\ref{#1}}}
\newcommand{\alref}[1]{{\textbf{Algorithm~\ref{#1}}}}
\DeclareMathOperator{\dv}{\boldsymbol{\mathnormal{d}}}
\DeclareMathOperator{\dr}{d}
\DeclareMathOperator{\pk}{peak}
\DeclareMathOperator{\nv}{\boldsymbol{\mathnormal{n}}}
\DeclareMathOperator{\I}{I}
\DeclareMathOperator{\G}{G}
\DeclareMathOperator{\CGM}{CGM}
\DeclareMathOperator{\parO}{\partial\Omega}
\DeclareMathOperator{\parOi}{\partial\Omega_\mathnormal{i}}
\DeclareMathOperator{\parOim}{\partial\Omega_\mathnormal{i}^\mathnormal{m}}
\DeclareMathOperator{\ai}{a_\mathnormal{i}}
\DeclareMathOperator{\rect}{rect}
\definecolor{ violet!10}{RGB}{1, 25, 89}
\definecolor{lime!10}{RGB}{104, 123, 62}
\definecolor{ violet!1043}{RGB}{157, 137, 43}
\definecolor{ orange!10}{RGB}{250, 204, 250}
\begin{document}
\raggedbottom
\interfootnotelinepenalty=10000
\title{Modeling and Optimization of Insulin Injection\\ for Type-1 Diabetes Mellitus Management} 
\author{Rinrada Jadsadaphongphaibool, Dadi Bi, Christian D. Lorenz, Yansha Deng,~\IEEEmembership{Senior Member,~IEEE}, and Robert Schober,~\IEEEmembership{Fellow,~IEEE}
\thanks{Rinrada Jadsadaphongphaibool, Dadi Bi, Christian D. Lorenz, and Yansha Deng are with the Department of Engineering, King’s College London, London WC2R 2LS, U.K. (e-mail:\{rinrada.jadsadaphongphaibool, dadi.bi, chris.lorenz, yansha.deng\}@kcl.ac.uk). (Corresponding author: Yansha Deng and Dadi Bi).}
\thanks{Robert Schober is with the Institute for Digital Communications, Friedrich Alexander-Universit\"{a}t (FAU) Erlangen-N\"{u}rnberg, 91058 Erlangen, Germany (e-mail: robert.schober@fau.de).}
\thanks{This work was supported by Engineering and Physical Sciences Research Council (EPSRC), U.K., under Grant EP/T000937/1.}
\thanks{This work was funded in part by the Deutsche Forschungsgemeinschaft (DFG, German Research Foundation) – GRK 2950 – Project-ID 509922606.}
\thanks{This paper has been accepted in part for the 2025 International Conference on Communications, June 2025.}
}

\markboth{}%
{}


\maketitle

\begin{abstract}
Diabetes mellitus is a global health crisis characterized by poor blood sugar regulation, impacting millions of people worldwide and leading to severe complications and mortality. Although Type 1 Diabetes Mellitus (T1DM) has a lower number of cases compared to other forms of diabetes, it is often diagnosed at a young age and requires lifelong exogenous insulin administration. In this paper, we focus on understanding the interaction of insulin and glucose molecules within the subcutaneous layer, which is crucial for blood sugar control in T1DM patients. Specifically, we propose a comprehensive model to characterize the insulin-glucose system within the subcutaneous layer, incorporating a multicellular molecular communication system. We then divide the T1DM system into insulin and glucose subsystems and derive the end-to-end expression for insulin-glucose interaction in the subcutaneous layer. We further validate the insulin-glucose interaction analysis with an agent-based simulator. As effectively managing postprandial glucose levels is crucial for individuals with T1DM to safeguard their overall health and avert short-term and long-term complications, we also derive the optimal insulin administration time based on the derived glucose response via the Lagrange multiplier and gradient descent ascent method. This allows us to explore the impact of different types of insulin and dietary management on blood sugar levels. Simulation results confirm the correctness of our proposed model and the effectiveness of our optimized effective time window for injecting insulin in individuals with T1DM.
\end{abstract}

\begin{IEEEkeywords}
Glucose, insulin, molecular communication (MC), optimization problem, type-1 diabetes mellitus (T1DM).
\end{IEEEkeywords}
\IEEEpeerreviewmaketitle
\section{Introduction}
Diabetes mellitus is a chronic medical condition characterized by elevated blood sugar levels due to disrupted regulation \cite{roglic2016global}. This disruption in glucose regulation can lead to severe complications, including cardiovascular issues, nerve damage, kidney failure, and other debilitating conditions, impacting both individual well-being and global healthcare systems. In 2021, it affected approximately 537 million people worldwide, leading to 6.7 million deaths in that year \cite{atlas2021international}. There are two types of diabetes: Type 1 Diabetes Mellitus (T1DM) which arises from an autoimmune attack on insulin-producing cells\cite{katsarou2017type}, and Type 2 Diabetes Mellitus (T2DM) which results from insulin resistance and reduced insulin production\cite{gonzalez2023evolutionary}. Although T1DM patients comprise only 8$\sim$10\% of the cases, the distinct nature of T1DM, the fact that it is often diagnosed in young people, and the need for lifelong insulin therapy underscore the need for a comprehensive understanding and dedicated research to manage this complex disease.

Various treatments and interventions are available for diabetes management to regulate blood sugar levels and mitigate associated complications \cite{suji2003approaches,chamberlain2016diagnosis}. Lifestyle modifications, including diet and exercise, are the initial and common recommendations in the early stages of T2DM, promoting healthy habits to manage blood sugar effectively \cite{colberg2010exercise,kirwan2017essential}. When T2DM progresses to the stage where lifestyle changes become ineffective, antidiabetic drugs, such as metformin and sulfonylureas, are frequently employed to improve insulin sensitivity and modulate glucose homeostasis by manipulating the metabolic system and its processes \cite{modi2007diabetes,wang2017efficacy}. However, lifestyle modifications and antidiabetic drugs are inadequate for T1DM as the body is unable to produce insulin and therapies aimed at modifying insulin sensitivity are ineffective. In such a case, the use of exogenous insulin by injection emerges as a crucial therapeutic choice for patients with T1DM \cite{mathieu2017insulin}, due to its ability to directly supplement deficient endogenous insulin production and maintain optimal blood sugar levels. Despite its effectiveness in diabetes therapy, excessive use of exogenous insulin can lead to insulin resistance and various side effects, e.g., weight gain, worsening of diabetic retinopathy, and breathing difficulties \cite{holman2009three,lebovitz2011insulin}. Thus, optimizing the timing of insulin injections to ensure administration at the most effective time and reduce the frequency of injections requires further study.

To optimize insulin injection for effective drug administration, the monitoring of blood sugar levels over time is essential. Continuous glucose monitor (CGM) systems have been developed to aid diabetic care by measuring the blood sugar level in real-time replacing the traditional finger-prick blood sample method\cite{vettoretti2020advanced}. When CGM is combined with an insulin pump, it can dynamically adjust insulin dosage concentration and timing to allow customized care tailored for the individual needs of patients\cite{tyler2020artificial}. To effectively adapt insulin injection, a mathematical model that can capture the physiological process of glucose and insulin interaction is needed.

The physiologically-based pharmacokinetic (PBPK) mathematical model was introduced in \cite{visentin2020padova} to investigate, observe, and predict the interaction of new drugs inside the human body. Its potential for optimization of meal size, timing, and insulin dosage for blood sugar control in T2DM patients when combined with an evolutionary algorithm was shown in\cite{gonzalez2023evolutionary}. PBPK exploits human physiology, population statistics, and drug characteristics to describe the pharmacokinetic interaction. However, it heavily relies on chemical property prediction models and may not account for physiological functions beyond the fluid circulation in patients \cite{khalil2011physiologically}. This can potentially lead to inaccuracies in the treatment outcome which results in less effective drug delivery.

To handle this limitation, multicellular modeling has been proposed to provide insights into the dynamic and complex interactions in living organisms by modeling the characteristics and behaviors of individual cells as well as their interactions with neighboring cells. Multicellular modeling has been used in plant biology \cite{bucksch2017morphological}, cancer biology \cite{venugopalan2014multicellular}, synthetic biology \cite{gorochowski2020toward}, and tissue engineering \cite{montes2019mathematical}. In multicellular modeling, communication between cells is facilitated by molecular communication (MC), which is a biological communication process that relies on chemical signaling \cite{bi2021survey}. Multicellular modeling effectively mimics the natural communication process of the considered biological system and allows optimization and customization of drug delivery treatment \cite{veiseh2010design}. While multicellular modeling has been explored in other contexts, further studies are required to fully understand its potential and optimize its application specifically for T1DM treatment.

Motivated by the above, the primary goal of this paper is to propose a foundational framework that characterizes the insulin-glucose system within the subcutaneous layer of T1DM patients. This framework can serve as the basis for future development, and its value is demonstrated by using it to optimize the timing of insulin injections. The contributions of this paper are outlined as follows: \begin{itemize}
    \item We propose a model for the insulin-glucose system in the subcutaneous layer of a T1DM patient. Our model framework is based on a multicellular MC system and consists of the insulin subsystem and the glucose subsystem capturing the spatial-temporal dynamics of insulin and glucose molecules. 
    \item We establish a mathematical framework for characterizing the T1DM system, which can be applied to analyze other multicellular environments in healthcare scenarios. Importantly, we apply Green's second identity to the partial differential equations (PDEs) to form the boundary integral equation (BIE)\cite{costabel1987principles} in \leref{th:1}. Unlike \cite{zoofaghari2021semi}, in \thref{th:BIEsolution}, we use approximations and the Adomian decomposition methods\cite{rahman2007integral} to derive an analytical expression for the concentration of insulin and glucose molecules in \textbf{Corollaries 1-3}. To validate our derivations, we utilize an agent-based simulator to characterize the cellular behavior of our proposed model. The results obtained from the simulator validate the proposed insulin-glucose interaction model and our derived mathematical framework.
    \item We investigate and determine the most effective time window for injecting insulin for patients with T1DM to manage postprandial glucose levels (the amount of glucose in the blood after a meal) in \textbf{Theorems 4} and \textbf{5}. We first consider a specific case that simplifies the environmental complexity, leading to the derivation of a closed-form expression for an upper limit on the insulin injection time. We also numerically determine an upper limit for the insulin injection time when the environmental conditions are more complex.   
    \item We investigate the impact of various types of insulin on the blood sugar level. Additionally, we examine the effect of dietary management on the ideal insulin injection time window. Our results reveal that carefully selecting the insulin type and effective dietary management can significantly enhance the quality of life of patients.
\end{itemize}

The remainder of this paper is organized as follows. In \secref{sec:T1DM}, we propose a model to mathematically describe a T1DM system in the subcutaneous layer. In \secref{sec:recob}, we apply the BIE to characterize our proposed T1DM model. In \secref{sec:opt}, we propose an optimization problem to obtain the optimal time for insulin injection after a meal. Numerical results in \secref{sec:numres} validate the proposed system model. Finally, we conclude this paper in \secref{sec:concl}. The main notations used in this paper are summarized in Table I.

\begin{table*}[!t]
    \centering
    \caption{Table of Notations}
    \label{tab:notation} 
    \begin{tabular}{p{0.15\linewidth}|c|p{0.65\linewidth}}
    \hline
    Notation&Unit&Definition\\\hline\hline
         $\Omega$&&Continuous interstitial fluid\\
         $N_{b}$, $N_{b}^R$, $N_{b}^D$&&Total number of disjoint boundaries\\
         $N_c$&&Total number of cells\\
         $\parOi$&&Set of all the center points on the $i$th boundary\\
         $\dv_{0}$, $\dv_{0_{\I}}$, $\dv_{0_{\G}}$&$\mu$m&Initial position where the molecules is released\\
         $t_{0}$, $t_{0_{\I}}$, $t_{0_{\G}}$&h&Initial time when the molecules are released\\
         $F_{\I}(\cdot)$, $F_{\G}(\cdot)$&nM&Input concentration\\
         $N_{\I_{0}}$&&Number of input insulin molecules\\
         $V_{\I_{0}}$&$\mu\mathrm{m}^3$&Volume of input insulin molecules\\
         $\dv$, $\dv'$, $\dv''$&$\mu$m&Vector defining position in space\\
         $c_{\I}(\cdot)$&nM&Insulin concentration\\
         $D$, $D_{\I}$, $D_{\G}$&$\mu$m$^2$ s$^{-1}$&Diffusion coefficient\\
         $k_{\dr}$, $k_{\dr_{\I}}$, $k_{\dr_{\G}}$&$\mathrm{h}^{-1}$&Degradation rate\\
         $\nv$, $\nv_{N_b}$, $\nv_{i}$&&Unit normal vector pointing outward from the surface\\
         $k_{\mathrm{a}}$, $k_{\mathrm{a}_{N_b}}$, $k_{\ai}$&$\mu$m s$^{-1}$&The absorption rate constant\\    
         $c_{\G}(\cdot)$&nM&Glucose concentration\\
         $\tau_{\G_{2}}$&h&Time taken for GLUT4 translocation\\
         $\tau_{\G_{3}}$&h&Duration of glucose influx\\
         $\overline{N_{\I}}(\cdot)$&&Average number of insulin molecules absorbed across all cells\\
         $\tau_{\G_{1}}$&h&Time taken after insulin injection for most cells to have received insulin\\
         $h(\cdot)$&nM&Molecular concentration in an unbounded environment\\
         $M_{i}$&&Number of sub-boundaries or meshes\\
         $S^m_{i}$&$\mu$m$^2$&Surface area of the mesh\\
         $\dv^m_{i}$&$\mu$m&Center of the mesh\\
         $\tau_{0_{\I}}$&h&Ideal window for insulin injection\\
         $\hat{t}_{0_{\I}}$&h&Upper bound on the insulin injection time after a meal\\
         $\phi$&nM&Post-meal glucose level threshold\\
         $c_{\pk}$&nM&Peak glucose concentration\\
         $\gamma_1$, $\gamma_2$&&Learning rate\\
\hline
\end{tabular}
\end{table*}

\section{System Model\label{sec:T1DM}}


To evaluate the effect of insulin intake on blood sugar control for patients with T1DM, we consider the system shown in \figref{fig:systemModel}(a), particularly focusing on the subcutaneous layer, located beneath the skin. This layer is commonly used for glucose monitoring and insulin injection due to its accessibility and rich interstitial fluid environment. Thus, our system comprises the injection of exogenous insulin into the subcutaneous layer while using a CGM to monitor glucose levels. Furthermore, we characterize the resulting distribution of insulin and glucose molecules within the subcutaneous layer to analyze the interactions between insulin and glucose.

\begin{figure}[h]
    \centering
    \includegraphics[width=\linewidth]{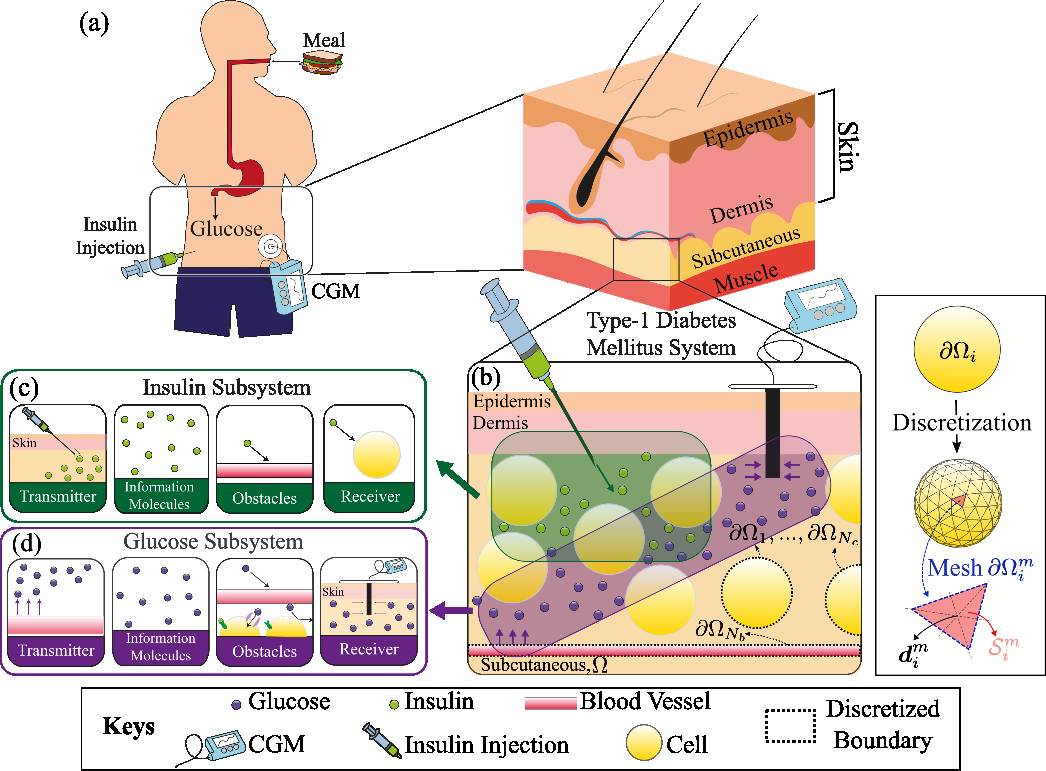}
    \caption{System Model. (a) Illustration of a T1DM patient. (b) Cross-section of T1DM patient's subcutaneous layer. (c) Insulin subsystem. (d) Glucose subsystem.}
    \label{fig:systemModel}
\end{figure}

The subcutaneous layer can be modeled as a 3-dimensional (3D) diffusion-based MC channel constituted by the continuous interstitial fluid environment $\Omega$, as shown in Fig. 1(b). We assume that the subcutaneous layer comprises $N_b$ disjoint boundaries, including $N_c$ cells and a blood vessel. Each boundary is discretized into $M_{i}$ meshes, where the $m$th mesh of the $i$th boundary, i.e., $\parOim$, is represented by a center point $\dv_{i}^{m}$ with a surface area $S_i^m$. The set of all the center points on the $i$th boundary is denoted as $\partial \Omega_i$. We also assume that the first $N_c$ boundaries are all cell membranes and the $N_b$th boundary is the blood vessel wall. The mesh center points on all the cell membrane boundaries and the blood vessel boundary are denoted as $\parO_{c}=\bigcup\limits_{i=1}^{N_c}\parOi$ and $\parO_{N_b}$, respectively. For the ease the analysis of the insulin-glucose system, we divide the considered system into two subsystems (see \figref{fig:systemModel}): the insulin subsystem and the glucose subsystem, which are described in detail in Sections~\ref{sec:insulin} and \ref{sec:glucose}, respectively.

\subsection{Insulin Subsystem\label{sec:insulin}}
As shown in \figref{fig:systemModel}(c), the insulin subsystem comprises the insulin release, where a syringe is used as the transmitter, insulin propagation with the blood vessel as an obstacle, and insulin reception, where the cells are the receivers. We mathematically model this subsystem in the following manner:
\subsubsection{Insulin Injection}
Insulin, injected just beneath the skin into the subcutaneous layer, is vital to manage diabetes by regulating blood sugar levels. We assume that insulin is injected in an impulsive manner into the subcutaneous layer at location $\dv_{0_{\I}}$ and at time $t_{0_{\I}}$, which can be modeled as follows \cite{Jamali2019Channel}\begin{align}
   F_{\I}(\dv_{0_{\I}},t_{0_{\I}}) = \dfrac{N_{\I_{0}}}{V_{\I_{0}}}\delta(\dv-\dv_{0_{\I}})\delta(t-t_{0_{\I}}), \label{eq:insulinIC}
\end{align} where $F_{\I}(\dv_{0_{\I}},t_{0_{\I}})$ is the input concentration of the insulin, $N_{\I_{0}}$ is the number of input insulin molecules, $V_{\I_{0}}$ is the volume of the input insulin molecules, $\dv=[x,y,z]$ is a vector defining a position in space, and $\delta(\cdot)$ is the Dirac delta function.

\subsubsection{Insulin Propagation}
After injection, insulin molecules propagate inside the subcutaneous layer, and the concentration of insulin in the subcutaneous layer can be described by the diffusion-reaction equation as follows \eqcite{zoofaghari2021semi}{5}\begin{align}
     \dfrac{\partial c_{\I}(\dv,t|\dv_{0_{\I}},t_{0_{\I}})}{\partial t} =& D_{\I}\nabla_{\dv}^{2}c_{\I}(\dv,t|\dv_{0_{\I}},t_{0_{\I}})\nonumber\\&-k_{\dr_{\I}}c_{\I}(\dv,t|\dv_{0_{\I}},t_{0_{\I}}) + F_{\I}(\dv_{0_{\I}},t_{0_{\I}}),\label{eq:cIPDE}
\end{align} where $c_{\I}$ is the concentration of insulin, $D_{\I}$ is the diffusion coefficient of insulin, $\nabla_{\dv}^2$ is the Laplace operator, and $k_{\dr_{\I}}$ is the degradation rate of insulin. During propagation, the molecules can either be absorbed by the cells or encounter the blood vessel. The blood vessel absorbs insulin molecules from the venous side and circulates them throughout the body. Over time, these molecules can be re-released into the subcutaneous layer through filtration at the arterial end \cite{scanlon2018essentials}. Since the propagation within the blood vessel is not the primary focus of this study, we model the blood vessel as a partially absorbing boundary. As such, the corresponding boundary condition is given by \eqcite{Deng2016MolecularReceiver}{5} \begin{align}
    D_{\I}\nabla_{\dv} c_{\I}(\dv,t|\dv_{0_{\I}},t_{0_{\I}})\cdot\nv_{N_b}|_{\dv\in \parO_{N_b}} = k_{\mathrm{a}_{N_b}}c_{\I}(\dv,t|\dv_{0_{\I}},t_{0_{\I}}), \label{eq:BC}
\end{align} where $\nabla_{\dv}$ is the Nabla operator, operator $\cdot$ denotes the inner product, $\nv_{N_b}$ is the unit normal vector pointing outward from the surface of the blood vessel, and $k_{\mathrm{a}_{N_b}}$ is the absorption rate constant of the blood vessel.


\subsubsection{Insulin Reception}
As the reception of insulin by cells is based on the receptors distributed on the cell surface, each cell can be considered as an active receiver, which can be modeled as follows \eqcite{Deng2016MolecularReceiver}{5}\begin{align}
        D_{\I}\nabla_{\dv} c_{\I}(\dv,t|\dv_{0_{\I}},t_{0_{\I}})\cdot\nv_{i}|_{\dv\in \parOi} = k_{\ai}c_{\I}(\dv,t|\dv_{0_{\I}},t_{0_{\I}}). \label{eq:BCins}
\end{align} where $\nv_i$ is the unit normal vector pointing outward from the surface at point $\dv\in \parOi$, and $k_{\ai}$ is the absorption rate constant of the boundary $\parOi$.

\subsection{Glucose Subsystem\label{sec:glucose}}
\begin{figure}
    \centering
    \subfloat[Cell reception of glucose molecules without insulin \label{fig:noinsulin}]{\includegraphics[width=0.45\linewidth]{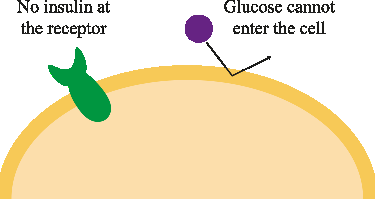}} \hfill
    \subfloat[Cell reception of glucose molecules with insulin \label{fig:withinsulin}]{\includegraphics[width=0.45\linewidth]{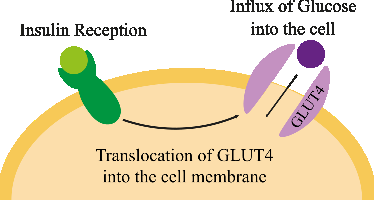}}\\
    \caption{The insulin-glucose reception mechanism by a cell.}
    \label{fig:CellRep}
\end{figure}

As shown in \figref{fig:systemModel}(d), the glucose subsystem comprises the glucose from the meal released by the blood vessel which serves as transmitter, glucose propagation inside the subcutaneous layer with cells and the blood vessel as obstacles, and glucose reception by the CGM. We mathematically model this subsystem as follows.
\subsubsection{Glucose Emission}
Glucose is derived from the breakdown of carbohydrates in the meal and is absorbed into the bloodstream to provide energy for the body's needs. We consider an impulsive release of glucose into the subcutaneous layer at location $\dv_{0_{\G}}$ and time $t_{0_{\G}}$ for simplicity, as the response of a system to any arbitrary input can be calculated based on the system's impulse response (i.e., the system's response to an impulse input) using a convolution integral. The impulsive input is modeled as \begin{align}
    F_{\G}(\dv_{0_{\G}},t_{0_{\G}}) = \dfrac{N_{\G_{0}}}{V_{\G_{0}}}\delta(\dv-\dv_{0_{\G}})\delta(t-t_{0_{\G}}),\label{eq:glucoseIC}
\end{align} where $N_{\G_{0}}$ is the number of input glucose molecules, $V_{\G_{0}}$ is the volume of the input glucose molecules.

\subsubsection{Glucose Propagation}
Unlike insulin, for glucose, we assume that changes in concentration inside the body are mainly caused by cell reception, which implies that glucose degradation is negligible. Thus, the glucose concentration in the subcutaneous layer can be described as \eqcite{zoofaghari2021semi}{5} \begin{align}
     \dfrac{\partial c_{\G}(\dv,t|\dv_{0_{\G}},t_{0_{\G}})}{\partial t} &= D_{\G}\nabla_{\dv}^{2}c_{\G}(\dv,t|\dv_{0_{\G}},t_{0_{\G}}) \nonumber\\&+ F_{\G}(\dv_{0_{\G}},t_{0_{\G}}),\label{eq:cG}
\end{align} where $D_{\G}$ is the diffusion constant of glucose. During propagation, glucose uptake by a cell occurs exclusively when insulin is present; in the absence of insulin, glucose cannot permeate through the cell membrane\cite{petersen2018mechanisms}. Insulin reception initiates a cascade of protein activations, e.g., the translocation of the glucose transporter type 4 (GLUT4) into the cell membrane. This transporter facilitates glucose influx into the cell, enabling glucose metabolism for energy production. While our model neglects these intracellular processes for simplicity, we assume the GLUT4 translocation into the cell membrane takes $\tau_{\G_2}$ seconds, and the GLUT4 transporter remains functional on the cell membrane for $\tau_{\G_3}$ seconds during which glucose influx can occur. Accordingly, the reception of the glucose molecules by a cell falls into states $S_0$ and $S_1$:
\begin{itemize}
    \item \textbf{Glucose Rejection:} State $S_0$ implies that there is no or negligible insulin reception by a cell. When cells are not capable of absorbing glucose molecules (see \subfigref{fig:CellRep}{fig:noinsulin}), each cell membrane can be modeled as a reflective boundary \eqcite{hahn2012heat}{1-56}, i.e., \begin{align}
    D_{\G}\nabla_{\dv} c_{\G}(\dv,t|\dv_{0_{\I}},t_{0_{\I}})\cdot\nv_{i}|_{\dv\in \parOi} = 0. \label{eq:BC-reflective}
    \end{align} 
    \item \textbf{Glucose Influx:} State $S_1$ implies that insulin molecules have arrived at the cell, which triggers the reception of glucose. When an influx of glucose molecules into a cell occurs (see \subfigref{fig:CellRep}{fig:withinsulin}), the cell membrane can be modeled as a fully absorbing boundary, i.e., $k_{\ai} = \infty$. For a fully absorbing receiver, it can be assumed that any glucose molecule that reaches the receiver is absorbed as soon as it hits the surface; thus, the corresponding boundary condition can be modeled as a Dirichlet boundary condition as follows \eqcite{hahn2012heat}{1-54}\begin{align}
    c_{\G}(\dv,t|\dv_{0_{\I}},t_{0_{\I}})|_{\dv\in \parOi} = 0. \label{eq:BC-absorbing}
    \end{align}
\end{itemize}

In practice, each cell independently transitions from state $S_0$ to $S_1$. We simplify the glucose system by assuming that all cells change their states simultaneously to make the analysis tractable, i.e., when the average number of insulin molecules absorbed across all cells $\overline{N_{\I}}(\dv\in\parO_{c},t|\dv_{0_{\I}},t_{0_{\I}})=1$, all cells transition from state $S_0$ to $S_1$. \footnote{We have verified via simulations that the error caused by this assumption is negligible, see \secref{sec:numres}.} We define $\tau_{\G_1}$ as the time it takes after insulin injection at $\dv_{0_{\I}}$ for most cells to have received insulin.

Apart from the reception by cells, the blood vessel can also take in glucose molecules. The corresponding boundary condition for the blood vessel is similar to \eqref{eq:BC}.


\subsubsection{Glucose Reception by CGM}
The glucose molecules propagating inside the subcutaneous layer can be detected by a CGM device that occupies space $\Omega_{\CGM}$. CGM is a commonly used wearable technology that constantly tracks glucose levels throughout the day, offering a comprehensive view of fluctuations that traditional finger-prick tests would otherwise miss\cite{suji2003approaches}. It consists of a small needle placed under the skin that measures the blood sugar levels in the interstitial fluid through a chemical reaction with the glucose molecules that produce detectable electrical signals. For simplicity, the CGM is modeled as a volume passive receiver. This facilitates the analysis of the glucose distribution and concentration measurement, avoiding the intricacies of the underlying electrochemical processes.


\section{Receiver Observation\label{sec:recob}}
\begin{figure}
    \centering
    \tikzstyle{startstop} = [rectangle, rounded corners, minimum width=3cm, minimum height=1cm,text centered, draw=black,fill=red!5]
    \tikzstyle{io} = [trapezium, trapezium left angle=70, trapezium right angle=110,  minimum width=3cm, minimum height=1cm, text centered, text width=1.5cm, draw=black,fill=blue!10]
    \tikzstyle{io2} = [trapezium, trapezium left angle=70, trapezium right angle=110,  minimum width=3cm, minimum height=1cm, text centered, draw=black,fill=blue!10]
    \tikzstyle{io3} = [trapezium, trapezium left angle=70, trapezium right angle=110,  minimum width=3cm, minimum height=1cm, text centered, text width=3cm, draw=black,fill=blue!10]
    \tikzstyle{process} = [rectangle, minimum width=3cm, minimum height=1cm, text centered, text width=5cm, draw=black]
    \tikzstyle{process2} = [rectangle, minimum width=3cm, minimum height=1cm, text centered, draw=black]

    \begin{tikzpicture}[scale=0.75,node distance=1.6cm,every node/.style={scale=0.75}]
        \node (in1) [io] {PDE with boundary conditions};
        \node (in2) [io, right of=in1, xshift=2cm] {Fundamental PDE};
        \node (pro1) [process, below of=in1] {Green's Second Identity};

        \node (in3) [io2, below of=pro1] {BIE};
        \node (pro2) [process, below of =in3]{Adomian Decomposition};
        \node (in4) [io3, below of=pro2] {Solution to the PDE with boundary condition};

        \node (pro3) [process2, below of =in2, yshift=-1.6cm]{Solution};


        \draw [->] (in1) -- (pro1);
        \draw [->] (pro1) -- (in3);
        \draw [->] (in3) -- (pro2);
        \draw [->] (pro2) -- (in4);
        \draw [->] (in2) |- (pro1);
        \draw [->] (in2) -- (pro3);
        \draw [->] (pro3) |- (pro2);
    \end{tikzpicture}
    \caption{Methodology flowchart for solving PDEs. Blue parallelograms represent equations and white blocks represent processes.}
    \label{fig:BIEsys}
\end{figure}

In this section, we solve the PDEs in \eqref{eq:cIPDE} and \eqref{eq:cG}, which are subject to multiple boundary conditions that are difficult to handle with conventional methods (e.g., separation of variables, transform method, and method of characteristics\cite{PDEforScientist}), by turning them into a boundary integral equation (BIE). As shown in \figref{fig:BIEsys}, Green's second identity can be applied to PDEs to reformulate a problem into a BIE\cite{costabel1987principles,liu2009fast}. This reduces the problem's dimensionality and naturally incorporates the boundary conditions into the equation, making it useful for problems where interior behavior (e.g., the mechanism inside the cell) is less important. Then, through the application of the Adomian decomposition method, the solution to the reformulated problem can be found. We present the new results in the form of lemmas, corollaries, and theorems in order to distinguish them from the known mathematical formulations used as a starting point for our analysis.

In the following, we first present the derivation of the BIE for the generic case of a PDE with Dirichlet's and Robin's boundary condition in \secref{sec:genBIE}. We then analyze the insulin subsystem using the BIE derived with the Adomian decomposition method in \secref{sec:insulinSA}. Based on the same principle, we also analyze the glucose subsystem in \secref{sec:glucoseSA}.

\subsection{General Boundary Integral Equation}\label{sec:genBIE}
As shown in \figref{fig:BIEsys}, explicit knowledge of the solution of the fundamental differential equation (i.e., the differential equation without the boundary conditions) is required to form the BIE. The fundamental differential equation can be formulated as an unbounded diffusion-reaction equation with an impulsive input at an arbitrary location $\dv'$ and at time $t=0$ as follows\begin{align}
    \dfrac{\partial h(\dv,t|\dv')}{\partial t} =& D\nabla_{\dv}^{2}h(\dv,t|\dv')-k_{\dr}h(\dv,t|\dv')\nonumber\\& + \delta(\dv-\dv')\delta(t), \label{eq:h}
\end{align}where $h(\dv,t|\dv')$ is the molecular concentration in an unbounded environment. To determine the generic BIE that fits both the insulin and glucose subsystems, we combine the two types of boundary conditions in \eqref{eq:BC} and \eqref{eq:BC-absorbing} as follows\begin{align}
    \begin{cases}
        D\nabla_{\dv} c(\dv,t|\dv_{0},t_{0})\cdot\nv_i|_{\dv\in \parOi} = k_{\ai}c(\dv,t|\dv_{0},t_{0}),\\
        \hfill i = 1,\cdots,N_b^R,\vspace{2ex}\\
        c(\dv,t|\dv_{0},t_{0})|_{\dv\in \parOi} = 0,\\
        \hfill i=N_b^R + 1,\cdots,N_b^R + N_b^D,
    \end{cases},\label{eq:generalBC}
\end{align} where $N_b^R$ and $N_b^D$ denote the number of boundaries with generalized Robin and Dirichlet boundary conditions, respectively. Relying on Green's second identity \cite{sadiku2000numerical}, we can reformulate the diffusion-reaction equation in \eqref{eq:cIPDE} in terms of the fundamental equation in \eqref{eq:h} and the boundary conditions in \eqref{eq:generalBC}.
\begin{lemma}\label{th:1}
    The BIE for the diffusion-reaction equation in \eqref{eq:cIPDE} with the boundary conditions in \eqref{eq:generalBC} can be derived as \begin{align}
            &C(\dv',\omega|\dv_{0},t_{0})= \nonumber\\ &\sum_{i=N_b^R + 1}^{N_b}\oint_{\parOi} H(\dv,\omega|\dv') \nabla_{\dv} C(\dv,\omega|\dv_{0},t_{0})\cdot\nv_i\mathrm{d}\mathcal{S}\nonumber\\&+\sum_{i=1}^{N_b^R}\oint_{\parOi} C(\dv,\omega|\dv_{0},t_{0})[k_{\ai}H(\dv,\omega|\dv')\nonumber\\ &-D\nabla_{\dv} H(\dv,\omega|\dv')\cdot\nv_i] \mathrm{d}\mathcal{S} + H(\dv_{0},\omega|\dv')e^{-j\omega t_{0}},\label{eq:generalBIE}
        \end{align}where $C(\dv,\omega|\dv_{0},t_{0})$ and $H(\dv,\omega|\dv')$ are the Fourier transforms of $c(\dv,t|\dv_{0},t_{0})$ and the solution to the fundamental solution in \eqref{eq:h}, respectively. In \eqref{eq:generalBIE}, $N_b = N_b^R + N_b^D$ and $\oint_{\parOi}(\cdot)\mathrm{d}\mathcal{S}$ represents the surface integral over the closed boundary of the $i$th object.
\end{lemma}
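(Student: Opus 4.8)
The plan is to derive \eqref{eq:generalBIE} by combining Green's second identity with the defining property of the fundamental solution $h$, after passing to the Fourier (frequency) domain to eliminate the time derivative. First I would take the Fourier transform in $t$ of both the diffusion-reaction equation \eqref{eq:cIPDE} and the fundamental equation \eqref{eq:h}. Writing $C(\dv,\omega|\dv_{0},t_{0})$ and $H(\dv,\omega|\dv')$ for the transforms of $c$ and $h$, the time derivative $\partial_t$ becomes multiplication by $j\omega$, and the impulsive source $\delta(\dv-\dv_0)\delta(t-t_0)$ becomes $\delta(\dv-\dv')e^{-j\omega t_0}$ (with $\dv_0\to\dv'$ suitably identified). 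This yields the pair of Helmholtz-type equations $D\nabla^2_{\dv}C-(k_{\dr}+j\omega)C=-F$ and $D\nabla^2_{\dv}H-(k_{\dr}+j\omega)H=-\delta(\dv-\dv')$ on $\Omega$, which differ only in their inhomogeneity; crucially the operator $D\nabla^2_{\dv}-(k_{\dr}+j\omega)$ is now self-adjoint and identical in both.

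Next I would apply Green's second identity $\int_{\Omega}(C\nabla^2 H - H\nabla^2 C)\,\mathrm{d}V = \oint_{\parO}(C\nabla H - H\nabla C)\cdot\nv\,\mathrm{d}\mathcal{S}$ over the fluid domain $\Omega$. Multiplying the $H$-equation by $C$, the $C$-equation by $H$, subtracting, and integrating over $\Omega$, the $k_{\dr}+j\omega$ terms cancel and the Laplacian terms assemble into the Green's-identity surface integral; the right-hand side produces $-C(\dv',\omega|\dv_0,t_0)$ from the $\delta(\dv-\dv')$ sifting term and $H(\dv_0,\omega|\dv')e^{-j\omega t_0}$ from the source $F$. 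Rearranging gives $C(\dv',\omega|\dv_0,t_0)$ expressed as $H(\dv_0,\omega|\dv')e^{-j\omega t_0}$ plus a sum of boundary integrals $\oint_{\parOi}\bigl(H\,\nabla_{\dv}C - C\,\nabla_{\dv}H\bigr)\cdot\nv_i\,\mathrm{d}\mathcal{S}$ over the $N_b$ disjoint boundaries. (Orientation: since $\nv_i$ is defined as pointing outward from each object's surface, it points \emph{into} $\Omega$, and I would track that sign carefully so the final expression matches \eqref{eq:generalBIE}.)

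Finally I would insert the boundary conditions \eqref{eq:generalBC}. On the $N_b^D$ Dirichlet boundaries ($i>N_b^R$), $C|_{\parOi}=0$ kills the $C\,\nabla_{\dv}H\cdot\nv_i$ term, leaving only $\oint_{\parOi} H\,\nabla_{\dv}C\cdot\nv_i\,\mathrm{d}\mathcal{S}$ — the first sum in \eqref{eq:generalBIE}. On the $N_b^R$ Robin boundaries, I substitute $D\nabla_{\dv}C\cdot\nv_i = k_{\ai}C$ into the $H\,\nabla_{\dv}C\cdot\nv_i = \tfrac{1}{D}H\,(D\nabla_{\dv}C\cdot\nv_i)$ term, producing $\oint_{\parOi} C\,[\,k_{\ai}H - D\nabla_{\dv}H\cdot\nv_i\,]\tfrac{1}{D}\,\mathrm{d}\mathcal{S}$; absorbing constants to match the stated normalization gives the second sum. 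Collecting the three pieces reproduces \eqref{eq:generalBIE}.

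The main obstacle I anticipate is bookkeeping rather than conceptual: getting every sign and normalization constant right (the direction convention for $\nv_i$, the factor of $D$ when converting between $\nabla_{\dv}C\cdot\nv_i$ and the flux $D\nabla_{\dv}C\cdot\nv_i$, and the $e^{-j\omega t_0}$ phase from the delayed source), and justifying the technical steps that make Green's identity applicable — namely that $C$ and $H$ are sufficiently regular on $\Omega$, that $\Omega$ is bounded by the union of the $N_b$ closed surfaces so there is no contribution at infinity, and that the Fourier transforms exist (guaranteed here by the degradation term $k_{\dr}>0$, or by a limiting argument when $k_{\dr}=0$ as in the glucose case). These are standard for the boundary-element framework \cite{costabel1987principles,liu2009fast}, so I would state them briefly and focus the write-up on the cancellation of the $(k_{\dr}+j\omega)$ terms and the substitution of the boundary conditions.
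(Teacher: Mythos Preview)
Your proposal is correct and follows exactly the approach the paper indicates: the paper itself does not give a self-contained proof but states that the result is obtained ``relying on Green's second identity'' and refers the reader to \cite{zoofaghari2021semi} for the derivation, which is precisely the Fourier-transform-then-Green's-identity argument you outline. Your anticipated bookkeeping issues (the sign of $\nv_i$, the factor of $D$ on the Dirichlet boundary term, and the $e^{-j\omega t_0}$ phase) are the only subtleties, and you have identified them correctly.
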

\begin{proof}
   Please refer to \cite{zoofaghari2021semi} for a comprehensive explanation of the derivation.
\end{proof}
Eq.~\eqref{eq:generalBIE} is in the form of a Fredholm integral equation of the second kind which can be solved using numerical techniques (e.g., the boundary element method (BEM) \cite{liu2009fast}) to determine the unknown concentration $C(\dv,\omega|\dv_{0},t_{0})$. However, the presence of multiple objects in the environment necessitates the creation of separate linear equations for each boundary, resulting potentially in a high computational complexity. The Adomian decomposition method is a systematic and versatile approach for solving nonlinear problems, notable for its grid-free methodology that streamlines computation and implementation compared to other numerical methods \cite{rahman2007integral}. Thus, we adopt the Adomian decomposition method to analyze \eqref{eq:generalBIE}. Then, to compute the surface integrals, each boundary $\parOi$ is discretized into $M_i$ sub-boundaries called meshes $\parOim$, with the area and center of each mesh denoted by $\mathcal{S}_i^m$ and $\dv_i^m$, respectively. The solution to the considered BIE is provided in the following theorem.
\begin{theorem}\label{th:BIEsolution}
    The solution to the BIE in \eqref{eq:generalBIE} is given by\begin{align}
        c(\dv,t|\dv_{0},t_{0}) &=  \sum_{n=0}^{\infty} c_n(\dv,t|\dv_{0},t_{0}),\label{eq:BIEsolution}
    \end{align} where \begin{align}
        c_0(\dv,t|\dv_{0},t_{0}) &= h(\dv_{0},t-t_{0}|\dv), \label{eq:c0general}
    \end{align} \begin{align}
        c_n(\dv,t&|\dv_{0},t_{0}) = \sum_{i=1}^{N_b^R}\sum_{m=1}^{M_i}c_{n-1}(\dv_i^m,t|\dv_{0},t_{0})\nonumber\\&* \mathcal{S}_i^m[k_{\ai}h(\dv_i^m,t|\dv)-D\nabla_{\dv} h(\dv_i^m,t|\dv)\cdot\nv_i^m]\nonumber\\&+\sum_{i=N_b^R+1}^{N_b}\sum_{m=1}^{M_i} \nabla_{\dv} c_{n-1}(\dv_i^m,t|\dv_{0},t_{0})\cdot\nv_i^m\nonumber\\&* D\mathcal{S}_i^m h(\dv_i^m,t|\dv),\label{eq:cngeneral}
    \end{align}and $*$ is the convolution operator.
\end{theorem}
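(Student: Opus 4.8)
The plan is to establish \eqref{eq:BIEsolution}--\eqref{eq:cngeneral} by applying the Adomian decomposition method directly to the Fredholm integral equation of the second kind in \eqref{eq:generalBIE}, and then transforming the frequency-domain recursion back into the time domain, where the products with the frequency-domain kernels become convolutions. First, I would rewrite \eqref{eq:generalBIE} compactly as $C = \mathcal{F} + \mathcal{K}[C]$, where $\mathcal{F}(\dv',\omega) = H(\dv_0,\omega|\dv')e^{-j\omega t_0}$ is the forcing term corresponding to the unbounded fundamental solution, and $\mathcal{K}$ is the linear integral operator collecting the Robin-boundary term (with kernel $k_{\ai}H - D\nabla_{\dv}H\cdot\nv_i$) and the Dirichlet-boundary term (with kernel $H$ acting on the unknown normal derivative $\nabla_{\dv}C\cdot\nv_i$). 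Since the operator is linear here, the Adomian decomposition reduces to the standard Neumann/Liouville series: positing $C = \sum_{n=0}^{\infty} C_n$ and matching, one obtains $C_0 = \mathcal{F}$ and $C_n = \mathcal{K}[C_{n-1}]$ for $n \ge 1$, i.e.\ $C_n = \mathcal{K}^n[\mathcal{F}]$.

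Next I would carry out the discretization step: each closed boundary integral $\oint_{\parOi}(\cdot)\mathrm{d}\mathcal{S}$ is replaced by the mesh sum $\sum_{m=1}^{M_i} \mathcal{S}_i^m(\cdot)|_{\dv = \dv_i^m}$, using that $\parOi$ is partitioned into $M_i$ meshes $\parOim$ of area $\mathcal{S}_i^m$ and center $\dv_i^m$. Applying this to $C_n = \mathcal{K}[C_{n-1}]$ gives, in the frequency domain, exactly the structure of \eqref{eq:cngeneral} but with every time convolution $*$ replaced by an ordinary product and every $h$ replaced by its Fourier transform $H$. The final step is the inverse Fourier transform: for the forcing term, $\mathcal{F}^{-1}\{H(\dv_0,\omega|\dv')e^{-j\omega t_0}\} = h(\dv_0,t-t_0|\dv')$, which yields \eqref{eq:c0general} (after the harmless relabeling $\dv' \to \dv$); and for each term in the recursion, a product of two frequency-domain functions maps to a time-domain convolution, so $C_n \to c_n$ with the products becoming the convolutions $*$ displayed in \eqref{eq:cngeneral}. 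One should also note the reciprocity/symmetry of the fundamental solution, $h(\dv,t|\dv') = h(\dv',t|\dv)$ for the isotropic diffusion-reaction kernel, to reconcile the argument ordering between \eqref{eq:generalBIE} and \eqref{eq:c0general}--\eqref{eq:cngeneral}.

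I expect the main obstacle to be twofold. First, making the termwise inverse Fourier transform rigorous requires controlling convergence of the series $\sum_n C_n$ and justifying that the sum commutes with $\mathcal{F}^{-1}$; the natural route is a contraction-mapping / Neumann-series bound showing $\|\mathcal{K}\| < 1$ in a suitable weighted space (exploiting the decay of $H$ with frequency and the boundedness of the discretized kernels, the absorption rates $k_{\ai}$, and the mesh areas $\mathcal{S}_i^m$), so that $\|C_n\| \le \|\mathcal{K}\|^n \|\mathcal{F}\|$ and the series converges absolutely. Second, care is needed with the Dirichlet-boundary terms, where the unknown appearing inside the integral is the normal flux $\nabla_{\dv}C\cdot\nv_i$ rather than $C$ itself; one must track how $\mathcal{K}$ acts on this quantity so that the recursion for $c_n$ picks up the $\nabla_{\dv}c_{n-1}(\dv_i^m,t|\dv_0,t_0)\cdot\nv_i^m$ factor in the second sum of \eqref{eq:cngeneral} correctly, and verify that applying $\nabla_{\dv'}$ to the BIE produces a consistent closed recursion. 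Beyond these two points, the remaining manipulations — the compact operator rewriting, the mesh quadrature, and the product-to-convolution correspondence — are routine.
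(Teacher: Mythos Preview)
Your proposal is correct and follows essentially the same route as the paper: Adomian decomposition of the frequency-domain BIE with $C_0$ set to the forcing term $H(\dv_0,\omega|\dv')e^{-j\omega t_0}$ and $C_n$ obtained recursively from the boundary integrals, followed by mesh discretization of the surface integrals and an inverse Fourier transform that turns the frequency-domain products into time-domain convolutions. Your additional remarks on establishing a Neumann-series contraction bound and on closing the recursion for the Dirichlet normal-flux terms are more careful than the paper's own treatment, which simply \emph{assumes} absolute convergence to justify interchanging sum and integral and writes the $\nabla_{\dv}C_{n-1}$ recursion without further comment.
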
 
\begin{proof}
   See \apref{ap:2}.
\end{proof}
\begin{remark}
    Eqs. (12)-(14) reveal that the concentration comprises a term corresponding to an unbounded environment, i.e., $c_{0}(\dv,t|\dv_0,t_0)$ in (13), and adjustment terms accounting for boundary effects, i.e., $c_{n}(\dv,t|\dv_0,t_0)$ in (14). As will be shown in the following subsections, the application of this theorem extends naturally to \textbf{Corollaries 1}, \textbf{2}, and \textbf{3} and demonstrates the theorem's generality and versatility in addressing a broader range of scenarios within the considered system.
\end{remark}

\subsection{Insulin Subsystem Analysis\label{sec:insulinSA}}
The insulin subsystem is governed by the PDE in \eqref{eq:cIPDE} with the boundary conditions specified in \eqref{eq:BC} and \eqref{eq:BCins}. Based on \thref{th:1}, we present the insulin concentration in the following corollary.
\begin{corollary} \label{co:cI}
    The insulin concentration is given by \eqref{eq:BIEsolution}, where $h(\dv,t|\dv')$ is the solution of the diffusion-reaction equation in \eqref{eq:h} for the unbounded environment and is given by \eqcite{Jamali2019Channel}{30} \begin{align}
    h(\dv,t|\dv') = \dfrac{1}{(4\pi D_{\I}t)^{3/2}}\exp{\bigg( -\dfrac{||\dv-\dv'||^2}{4D_{\I}t}-k_{\dr_{_{\I}}}t\bigg)},\label{eq:hsol}
\end{align} and $c_0(\dv,t|\dv_{0_{\I}},t_{0_{\I}})$ and $c_n(\dv,t|\dv_{0_{\I}},t_{0_{\I}})$ can be respectively derived as \begin{align}
        &c_0(\dv,t|\dv_{0_{\I}},t_{0_{\I}}) =\nonumber\\& \dfrac{1}{\big(4\pi D_{\I}(t-t_{0_{\I}})\big)^{3/2}}\exp{\bigg( -\dfrac{||\dv_{0_{\I}}-\dv||^2}{4D_{\I}(t-t_{0_{\I}})}-k_{\dr_{\I}}(t-t_{0_{\I}})\bigg)}, \label{eq:c0ins-2}
    \end{align} and
    \begin{align}
       &c_n(\dv,t|\dv_{0_{\I}},t_{0_{\I}}) = \sum_{i=1}^{N_b}\sum_{m=1}^{M_i} c_{n-1}(\dv_i^m,t|\dv_{0_{\I}},t_{0_{\I}})* \dfrac{\mathcal{S}_i^m}{(4\pi D_{\I}t)^{3/2}}\nonumber\\&\times\bigg\{k_{\ai}+D_{\I} \dfrac{(\dv_i^m-\dv)\cdot\nv_i^m}{4D_{\I}t}\bigg\}\exp{\bigg( -\dfrac{||\dv_i^m-\dv||^2}{4D_{\I}t}-k_{\dr_{\I}}t\bigg)}.\label{eq:cnins-2}
    \end{align} 
\end{corollary}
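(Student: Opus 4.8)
The plan is to specialize Theorem~\ref{th:BIEsolution} to the insulin subsystem, which is almost mechanical once one recognizes that the insulin case has \emph{no} Dirichlet boundaries: both the blood vessel condition \eqref{eq:BC} and the cell-reception condition \eqref{eq:BCins} are Robin-type, so in the notation of \eqref{eq:generalBC} we have $N_b^D=0$ and $N_b^R=N_b$. With this identification the second sum in \eqref{eq:cngeneral} (the Dirichlet/Neumann-flux term, indices $i=N_b^R+1,\dots,N_b$) is empty, and only the Robin sum survives. First I would write down the fundamental solution: \eqref{eq:h} with $D=D_{\I}$, $k_{\dr}=k_{\dr_{\I}}$ is the classical heat kernel with exponential decay, giving \eqref{eq:hsol}; this is standard (and cited to \eqcite{Jamali2019Channel}{30}), obtained e.g.\ by Fourier transform in space and solving the resulting ODE in $t$, or by noting $h=e^{-k_{\dr_{\I}}t}g$ where $g$ is the pure-diffusion Green's function.

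Next I would produce $c_0$. By \eqref{eq:c0general}, $c_0(\dv,t|\dv_{0_{\I}},t_{0_{\I}})=h(\dv_{0_{\I}},t-t_{0_{\I}}|\dv)$; substituting \eqref{eq:hsol} and using the symmetry $\|\dv_{0_{\I}}-\dv\|=\|\dv-\dv_{0_{\I}}\|$ of the Euclidean norm yields \eqref{eq:c0ins-2} directly. Then for the recursion I would substitute \eqref{eq:hsol} into \eqref{eq:cngeneral} with the empty Dirichlet sum removed. The kernel factor $k_{\ai}h(\dv_i^m,t|\dv)-D_{\I}\nabla_{\dv}h(\dv_i^m,t|\dv)\cdot\nv_i^m$ requires computing $\nabla_{\dv}h$: differentiating the Gaussian, $\nabla_{\dv}h(\dv_i^m,t|\dv)=h(\dv_i^m,t|\dv)\,\dfrac{\dv_i^m-\dv}{2D_{\I}t}$ (the gradient is with respect to the second slot $\dv$, and $\nabla_{\dv}\|\dv_i^m-\dv\|^2=-2(\dv_i^m-\dv)$, so the sign works out to point from $\dv$ toward $\dv_i^m$). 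Hence $-D_{\I}\nabla_{\dv}h\cdot\nv_i^m = -D_{\I}h\,\dfrac{(\dv_i^m-\dv)\cdot\nv_i^m}{2D_{\I}t}$. This gives a kernel of the form $\mathcal{S}_i^m\big[k_{\ai}+\tfrac{(\dv-\dv_i^m)\cdot\nv_i^m}{2t}\big]h(\dv_i^m,t|\dv)$, which upon pulling the Gaussian prefactor out and rewriting matches the bracket $\big\{k_{\ai}+D_{\I}\tfrac{(\dv_i^m-\dv)\cdot\nv_i^m}{4D_{\I}t}\big\}$ in \eqref{eq:cnins-2} up to a bookkeeping step.

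The one genuine wrinkle — and the place I would be most careful — is reconciling the factor of $2$ versus $4$ and the sign of $(\dv_i^m-\dv)\cdot\nv_i^m$ in the stated \eqref{eq:cnins-2}. My direct computation gives $\tfrac{(\dv-\dv_i^m)\cdot\nv_i^m}{2t}$, whereas the corollary displays $D_{\I}\tfrac{(\dv_i^m-\dv)\cdot\nv_i^m}{4D_{\I}t}=\tfrac{(\dv_i^m-\dv)\cdot\nv_i^m}{4t}$; I would check whether the authors' $\nv_i^m$ convention (outward from the cell surface, hence pointing \emph{into} $\Omega$ relative to $\dv$), together with a factor absorbed from the surface-integral discretization, accounts for the discrepancy, or whether the intended reading pairs the gradient differentiation with the $\tfrac14$ that naturally appears when one instead differentiates the \emph{exponent} $-\|\dv_i^m-\dv\|^2/(4D_{\I}t)$ and keeps the $\tfrac{1}{4D_{\I}t}$ explicit. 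Modulo this convention check, the proof is just: invoke Theorem~\ref{th:BIEsolution} with $(D,k_{\dr})=(D_{\I},k_{\dr_{\I}})$ and $N_b^D=0$, substitute the explicit kernel \eqref{eq:hsol}, evaluate the gradient of the Gaussian, and collect terms. I would state it in roughly four lines, flagging that all boundary integrals reduce to the Robin family because the insulin subsystem contains no absorbing (Dirichlet) boundary.
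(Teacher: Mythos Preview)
Your approach is correct and is exactly what the paper intends: the corollary is stated without proof as a direct specialization of Theorem~\ref{th:BIEsolution} with $(D,k_{\dr})=(D_{\I},k_{\dr_{\I}})$ and $N_b^D=0$, $N_b^R=N_b$, so only the Robin sum in \eqref{eq:cngeneral} survives and substitution of the explicit kernel \eqref{eq:hsol} yields \eqref{eq:c0ins-2}--\eqref{eq:cnins-2}. Your flag on the sign and the $2$-versus-$4$ in the gradient term is a legitimate observation about the displayed formula rather than a defect in your argument; the same coefficient convention recurs in \eqref{eq:cnglucoseS0} and \eqref{eq:cnglucoseS1}, so treat it as the paper's adopted normalization and move on.
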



As the glucose subsystem is dependent on the reception of insulin by a cell, we characterize the average number of insulin molecules absorbed by cell membrane $\parOi$ in the following theorem using the analysis in \coref{co:cI}.
\begin{theorem}
    The average number of insulin molecules received by the $N_{c}$ cells is given by\begin{align}
    \overline{N_{\I}}(\dv&\in\parO_{c},t|\dv_{0_{\I}},t_{0_{\I}})=\nonumber\\ & \dfrac{1}{N_{c}}\sum_{i=1}^{N_{c}}\sum_{m=1}^{M_i} \int_{t-\tau_{\G_2}}^{t-\tau_{\G_3}}\mathcal{S}_i^mk_{\ai}c_{\I}(\dv_i^m,\tau|\dv_{0_{\I}},t_{0_{\I}}) \mathrm{d}\tau.\label{eq:NI}
\end{align}
\end{theorem}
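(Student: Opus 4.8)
The plan is to derive \eqref{eq:NI} directly from the active-receiver boundary condition \eqref{eq:BCins} together with the insulin concentration obtained in \coref{co:cI}, by (i) converting the boundary condition into an absorption flux, (ii) integrating this flux over each cell surface and over the physiologically relevant time window, (iii) discretizing the surface integral using the mesh geometry, and (iv) averaging over the $N_c$ cells.

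First, I would observe that, by Fick's first law, the flux of insulin molecules per unit area leaving the interstitial fluid through a surface element at a point $\dv\in\parOi$ is $D_{\I}\nabla_{\dv}c_{\I}(\dv,\tau|\dv_{0_{\I}},t_{0_{\I}})\cdot\nv_i$, and by the active-receiver boundary condition \eqref{eq:BCins} this equals $k_{\ai}c_{\I}(\dv,\tau|\dv_{0_{\I}},t_{0_{\I}})$. Integrating this flux density over the closed membrane $\parOi$ gives the instantaneous rate at which cell $i$ absorbs insulin (up to the usual concentration-to-count normalization), namely $\oint_{\parOi}k_{\ai}c_{\I}(\dv,\tau|\dv_{0_{\I}},t_{0_{\I}})\,\mathrm{d}\Sur$. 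Since each boundary has already been discretized into $M_i$ meshes $\parOim$ with area $\Sur_i^m$ and center $\dv_i^m$, I would approximate this surface integral by the quadrature $\sum_{m=1}^{M_i}\Sur_i^m k_{\ai}c_{\I}(\dv_i^m,\tau|\dv_{0_{\I}},t_{0_{\I}})$, which is consistent with the mesh approximation underlying \thref{th:BIEsolution} and \coref{co:cI}; the values $c_{\I}(\dv_i^m,\tau|\dv_{0_{\I}},t_{0_{\I}})$ are supplied by \coref{co:cI}.

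Next, I would integrate this absorption rate over the time window $[\,t-\tau_{\G_2},\,t-\tau_{\G_3}\,]$. The endpoints follow from the GLUT4 dynamics posited in \secref{sec:glucose}: only the insulin received within this interval has, by the observation time $t$, completed the translocation delay $\tau_{\G_2}$ while the resulting GLUT4 transporters are still functional on the membrane for the duration $\tau_{\G_3}$, so this is precisely the portion of absorbed insulin that determines whether the cell is in the glucose-influx state $S_1$ at time $t$. This produces the per-cell count $\sum_{m=1}^{M_i}\int_{t-\tau_{\G_2}}^{t-\tau_{\G_3}}\Sur_i^m k_{\ai}c_{\I}(\dv_i^m,\tau|\dv_{0_{\I}},t_{0_{\I}})\,\mathrm{d}\tau$. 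Finally, invoking the simplifying assumption of \secref{sec:glucose} that all cells switch state together, I would replace the individual counts by their cell-average: dividing by $N_c$ and summing over $i=1,\dots,N_c$ yields \eqref{eq:NI}.

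I expect the main obstacle to be the careful justification of the integration limits, i.e., arguing from the translocation time $\tau_{\G_2}$ and the residence duration $\tau_{\G_3}$ exactly which window of past insulin reception is "active" at time $t$, and cleanly linking this count to the simultaneous-state-transition simplification so that the cell-average $\overline{N_{\I}}$ is the right object to track. The remaining ingredients — Fick's law, the surface-to-mesh quadrature, and the linearity of the time integral and of the averaging — are routine.
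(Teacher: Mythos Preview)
Your proposal is correct and follows essentially the same route as the paper: convert the normal diffusive flux on each cell membrane into $k_{\ai}c_{\I}$ via the active-receiver boundary condition \eqref{eq:BCins}, discretize the closed-surface integral with the mesh quadrature, integrate over the GLUT4-relevant time window $[t-\tau_{\G_2},\,t-\tau_{\G_3}]$, and then average over the $N_c$ cells. The only cosmetic difference is ordering (the paper writes the time integral before the mesh sum), and your justification of the integration limits is in fact slightly more detailed than the paper's.
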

\begin{proof}
    We first characterize the number of insulin molecules absorbed by cell membrane $\parOi$. To this end, we define the change in the total number of insulin molecules due to molecule flux as \eqcite{Schulten2000LecturesBiophysics}{3.20} \begin{align}
    \dfrac{\mathrm{d}N_{\I}(\dv\in\parOi,t|\dv_{0_{\I}},t_{0_{\I}})}{\mathrm{d}t} = \int_{\parOi} D_{\I}\nabla_{\dv} c_{\I}(\dv,t|\dv_{0_{\I}},t_{0_{\I}}) \mathrm{d}\mathcal{S}.\label{eq:dNdt}
\end{align} The change in the total number of particles in \eqref{eq:dNdt} on the cell membrane is equivalent to the flux at the boundary specified in \eqref{eq:BCins}, i.e., we can substitute $D_{\I}\nabla_{\dv} c_{\I}(\dv,t|\dv_{0_{\I}},t_{0_{\I}})$ with $k_{\ai}c_{\I}(\dv_i^m,\tau|\dv_{0_{\I}},t_{0_{\I}})$. As we are only interested in the insulin molecules that result in the translocation of GLUT4, we can evaluate the time derivative during the time period from $t-\tau_{\G_2}$ to $t-\tau_{\G_3}$. This leads to \begin{align}
    N_{\I}(\dv\in&\parOi,t|\dv_{0_{\I}},t_{0_{\I}}) =\nonumber\\& \sum_{m=1}^{M_i}\int_{t-\tau_{\G_2}}^{t-\tau_{\G_3}} \mathcal{S}_i^mk_{\ai}c_{\I}(\dv_i^m,\tau|\dv_{0_{\I}},t_{0_{\I}})\mathrm{d}\tau.\label{eq:N}
\end{align} Then, the average number of insulin molecules received by the cells can be derived as \eqref{eq:NI}.
\end{proof}

\subsection{Glucose Subsystem Analysis\label{sec:glucoseSA}}
The reception of the glucose molecules by a cell occurs in both states $S_0$ and $S_1$. We first derive the impulse response of the glucose subsystem for both states in Sections~\ref{sec:anaS0} and \ref{sec:anaS1}, respectively. Then, we derive the glucose concentration in the environment and that detected by the CGM through convolution of those two states in Sections~\ref{sec:anaglu} and \ref{sec:anaCGM}, respectively.
\subsubsection{State $S_0$ (reflective cells)}\label{sec:anaS0}
In state $S_0$, insulin is absent and the PDE in \eqref{eq:cG} with the boundary conditions \eqref{eq:BC} for the blood vessel and \eqref{eq:BC-reflective} for the cells is valid. Using \thref{th:1}, we provide the glucose concentration for state $S_0$ in the following corollary.
\begin{corollary}\label{co:cGS0}
    The impulse response for the glucose subsystem in state $S_0$ is given by \begin{align}
        c^{S_0}_{\G}(\dv,t|\dv_{0_{\G}},t_{0_{\G}}) &=  \sum_{n=0}^{\infty} c_n(\dv,t|\dv_{0_{\G}},t_{0_{\G}}),\label{eq:cGS0}
    \end{align} where \begin{align}
            c_0(\dv&,t|\dv_{0_{\G}},t_{0_{\G}}) = \nonumber\\&\dfrac{1}{(4\pi D_{\G}(t-t_{0_{\G}}))^{3/2}}\exp{\bigg( -\dfrac{||\dv_{0_{\G}}-\dv||^2}{4D_{\G}(t-t_{0_{\G}})}\bigg)}, \label{eq:c0glucose}
        \end{align} and
        \begin{align}
        c_n(\dv&,t|\dv_{0_{\G}},t_{0_{\G}}) =\sum_{i=1}^{N_b}\sum_{m=1}^{M_i} c_{n-1}(\dv_i^m,t|\dv_{0_{\G}},t_{0_{\G}})\nonumber\\&*\dfrac{\mathcal{S}_i^m D_{\G} (\dv_i^m-\dv)\cdot\nv_i^m}{4D_{\G}t(4\pi D_{\G}t)^{3/2}}\exp{\bigg( -\dfrac{||\dv_i^m-\dv||^2}{4D_{\G}t}\bigg)}
        \nonumber\\&+ \sum_{m=1}^{M_i} c_{n-1}(\dv_{N_b}^m,t|\dv_{0_{\G}},t_{0_{\G}})\nonumber\\&*\dfrac{\mathcal{S}_{N_b}^m k_{\ai}}{(4\pi D_{\G}t)^{3/2}}\exp{\bigg( -\dfrac{||\dv_{N_b}^m-\dv||^2}{4D_{\G}t}\bigg)}. \label{eq:cnglucoseS0}
    \end{align}
\end{corollary}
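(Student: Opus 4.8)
# Proof Proposal for Corollary 3 (State $S_0$ Glucose Concentration)

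The plan is to specialize the general solution of Theorem~\ref{th:BIEsolution} to the glucose PDE in \eqref{eq:cG} equipped with the two boundary conditions relevant to state $S_0$: the partially absorbing (Robin) condition \eqref{eq:BC} on the blood vessel wall $\parO_{N_b}$, and the reflective (Neumann) condition \eqref{eq:BC-reflective} on the $N_c$ cell membranes. First I would identify the fundamental solution $h(\dv,t|\dv')$ for the glucose operator: since glucose degradation is neglected ($k_{\dr_{\G}}=0$), setting $k_{\dr}=0$ and $D=D_{\G}$ in \eqref{eq:h} gives the pure heat kernel $h(\dv,t|\dv')=(4\pi D_{\G}t)^{-3/2}\exp\bigl(-\|\dv-\dv'\|^2/(4D_{\G}t)\bigr)$. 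Substituting this into \eqref{eq:c0general} and accounting for the release time $t_{0_{\G}}$ and location $\dv_{0_{\G}}$ yields \eqref{eq:c0glucose} directly.

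Next I would handle the recursion \eqref{eq:cngeneral}. The key observation is that the boundary index set splits into two groups with different kernels. For the blood vessel ($i=N_b$, a Robin boundary), the bracket $[k_{\ai}h(\dv_i^m,t|\dv)-D_{\G}\nabla_{\dv}h(\dv_i^m,t|\dv)\cdot\nv_i^m]$ appears; however, in state $S_0$ the cells are reflective, meaning they contribute through the \emph{second} sum of \eqref{eq:cngeneral} (the Neumann-type term with $\nabla_{\dv}c_{n-1}\cdot\nv_i^m$), but that flux vanishes on reflective cell boundaries — wait, rather, one must be careful: the reflective condition forces $\nabla_{\dv}c\cdot\nv_i=0$ on the cells, which is the Neumann term, so in the general BIE framework reflective cells are actually the limiting case of Robin boundaries with $k_{\ai}=0$. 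Thus I would treat all $N_c$ cells as Robin boundaries with absorption constant zero, so their contribution reduces to $-D_{\G}\nabla_{\dv}h(\dv_i^m,t|\dv)\cdot\nv_i^m$, and only the blood vessel keeps the $k_{\ai}h$ term. Computing $\nabla_{\dv}h(\dv_i^m,t|\dv)$ explicitly: differentiating the Gaussian gives $\nabla_{\dv}h = h\cdot\frac{(\dv_i^m-\dv)}{2D_{\G}t}$, so $-D_{\G}\nabla_{\dv}h\cdot\nv_i^m = -D_{\G}h\,\frac{(\dv_i^m-\dv)\cdot\nv_i^m}{2D_{\G}t}$. Matching signs and the factor structure against \eqref{eq:cnglucoseS0} (which shows $+\frac{\mathcal{S}_i^m D_{\G}(\dv_i^m-\dv)\cdot\nv_i^m}{4D_{\G}t(4\pi D_{\G}t)^{3/2}}$) requires tracking the orientation convention of the normal vector and the sign in Green's identity consistently.

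I would then assemble the two sums: the cell sum $\sum_{i=1}^{N_b}$ actually should be $\sum_{i=1}^{N_c}$ carrying the spatial-derivative kernel (the reflective-cell term), and the separate single sum over $m$ for $i=N_b$ carrying the $k_{\ai}$ absorption kernel for the blood vessel. Substituting $h$ and its gradient into these two pieces of the Theorem~\ref{th:BIEsolution} recursion and collecting the Gaussian prefactors gives exactly \eqref{eq:cnglucoseS0}. The final step is a bookkeeping check that the convolution operator $*$ (in time) is inherited unchanged from \eqref{eq:cngeneral}, since the kernels here are still causal functions of $t$.

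The main obstacle I anticipate is \emph{sign and normalization consistency}: correctly propagating the outward-normal convention through Green's second identity so that the reflective-cell term appears with a $+$ sign and the coefficient $1/(4D_{\G}t)$ rather than $1/(2D_{\G}t)$ — this hinges on how the factor of $2$ from differentiating $\|\dv_i^m-\dv\|^2$ interacts with the $D_{\G}$ in $-D_{\G}\nabla_{\dv}h\cdot\nv$, and on whether the paper absorbs a factor into the definition of $\mathcal{S}_i^m$ or the kernel. Everything else is a direct substitution of the Gaussian kernel (with $k_{\dr}=0$) into the already-proven Theorem~\ref{th:BIEsolution}.
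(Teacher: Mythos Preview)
Your approach is correct and is exactly what the paper does: the corollary is stated without a standalone proof because it is a direct specialization of \thref{th:BIEsolution} with $D=D_{\G}$, $k_{\dr}=0$, and the reflective cell membranes treated as the $k_{\ai}=0$ limit of the Robin term, leaving only the gradient piece on the cells and the $k_{\ai}h$ piece on the blood vessel. Your observations about the sign/factor bookkeeping and the index range of the first sum are legitimate cosmetic issues in the paper's stated formula, not defects in your argument.
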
    

\subsubsection{State $S_1$ (fully absorbing cells)}\label{sec:anaS1}
In state $S_1$, insulin is present and the PDE in \eqref{eq:cG} with the boundary conditions \eqref{eq:BC} for the blood vessel and \eqref{eq:BC-absorbing} for the cells is valid. Using \thref{th:1}, we obtain the glucose concentration for state $S_1$ in the following corollary.
\begin{corollary}\label{co:cGS1}
The impulse response of glucose subsystem in state $S_1$ can be expressed as \begin{align}
        c^{S_1}_{\G}(\dv,t|\dv_{0_{\G}},t_{0_{\G}}) &=  \sum_{n=0}^{\infty} c_n(\dv,t|\dv_{0_{\G}},t_{0_{\G}}),\label{eq:cGS1}
    \end{align}  where $c_0(\dv,t|\dv_{0_{\G}},t_{0_{\G}})$ is given by \eqref{eq:c0glucose} and
    \begin{align}
        &c_n(\dv,t|\dv_{0_{\G}},t_{0_{\G}}) = \sum_{i=1}^{N_b-1}\sum_{m=1}^{M_i} \nabla_{\dv} c_{n-1}(\dv_i^m,t|\dv_{0_{\G}},t_{0_{\G}})\nonumber\\&\cdot\nv_i^m*\dfrac{\mathcal{S}_i^m D_{\G}}{(4\pi D_{\G}t)^{3/2}}\exp{\bigg( -\dfrac{||\dv_i^m-\dv||^2}{4D_{\G}t}\bigg)}
        \nonumber\\&+c_{n-1}(\dv_{N_b}^m,t|\dv_{0_{\G}},t_{0_{\G}})* \dfrac{\mathcal{S}_{N_b}^m}{(4\pi D_{\G}t)^{3/2}}\nonumber\\&\times\bigg\{k_{\mathrm{a}_{N_b}}+D_{\G} \dfrac{(\dv_{N_b}^m-\dv)\cdot\nv_{N_b}^m}{4D_{\G}t}\bigg\}\exp{\bigg( -\dfrac{||\dv_{N_b}^m-\dv||^2}{4D_{\G}t}\bigg)}.\label{eq:cnglucoseS1}
    \end{align}     
\end{corollary}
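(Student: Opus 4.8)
The plan is to obtain \coref{co:cGS1} as a direct specialization of \thref{th:BIEsolution} to the boundary-value problem that governs state $S_1$, so that no machinery beyond \thref{th:BIEsolution} is needed. The first task is to read off the boundary data. In state $S_1$ the glucose concentration obeys \eqref{eq:cG}, which is the fundamental equation \eqref{eq:h} with $D=D_{\G}$ and no degradation ($k_{\dr}=0$, since glucose degradation is neglected), subject to the partially-absorbing (Robin) condition \eqref{eq:BC} on the blood-vessel wall $\parO_{N_b}$ and the fully-absorbing (Dirichlet) condition \eqref{eq:BC-absorbing} on each of the $N_b-1$ cell boundaries. Matching this against the generic set \eqref{eq:generalBC} of \leref{th:1} amounts to the assignment $N_b^R=1$ (blood vessel) and $N_b^D=N_b-1$ (cells). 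Since \eqref{eq:generalBIE} and the recursion \eqref{eq:cngeneral} are invariant under any relabeling of the boundaries, it is harmless that the Dirichlet boundaries here carry the indices $1,\dots,N_b-1$ rather than $N_b^R+1,\dots,N_b$; and the fully-absorbing limit $k_{\ai}=\infty$ never has to be taken explicitly, because the Dirichlet branch of \eqref{eq:cngeneral} involves only the normal flux of $c_{n-1}$ and not $k_{\ai}$.

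Next I would specialize the ingredients of \thref{th:BIEsolution}. Solving the unbounded equation \eqref{eq:h} with $k_{\dr}=0$ and $D=D_{\G}$ (e.g.\ by a spatial Fourier transform, exactly as for the insulin kernel in \eqref{eq:hsol}) yields the 3D heat kernel $h(\dv,t|\dv')=(4\pi D_{\G}t)^{-3/2}\exp(-||\dv-\dv'||^2/(4D_{\G}t))$. Feeding this into \eqref{eq:c0general} and using the symmetry of the kernel in its two spatial arguments gives $c_0(\dv,t|\dv_{0_{\G}},t_{0_{\G}})=h(\dv_{0_{\G}},t-t_{0_{\G}}|\dv)$, which is \eqref{eq:c0glucose}. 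In the recursion \eqref{eq:cngeneral}, the Dirichlet (cell) boundaries $i=1,\dots,N_b-1$ contribute $\nabla_{\dv}c_{n-1}(\dv_i^m,t|\dv_{0_{\G}},t_{0_{\G}})\cdot\nv_i^m * D_{\G}\mathcal{S}_i^m h(\dv_i^m,t|\dv)$, which is exactly the first double sum of \eqref{eq:cnglucoseS1}, while the Robin (blood-vessel) boundary $i=N_b$ contributes $c_{n-1}(\dv_{N_b}^m,t|\dv_{0_{\G}},t_{0_{\G}}) * \mathcal{S}_{N_b}^m[k_{\mathrm{a}_{N_b}}h(\dv_{N_b}^m,t|\dv)-D_{\G}\nabla_{\dv}h(\dv_{N_b}^m,t|\dv)\cdot\nv_{N_b}^m]$; differentiating the Gaussian and folding the $k_{\mathrm{a}_{N_b}}$-term and the normal-derivative term into a single bracket then produces the kernel $\{k_{\mathrm{a}_{N_b}}+D_{\G}(\dv_{N_b}^m-\dv)\cdot\nv_{N_b}^m/(4D_{\G}t)\}\exp(-||\dv_{N_b}^m-\dv||^2/(4D_{\G}t))$ appearing in \eqref{eq:cnglucoseS1}. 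Inserting both contributions into \eqref{eq:BIEsolution} and summing over $n$ gives the claimed series for $c^{S_1}_{\G}$.

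I do not expect a deep obstacle, since \thref{th:BIEsolution} already supplies the Adomian-decomposition solution and \coref{co:cGS1} is essentially its instantiation; the parts that demand care are clerical. One must fix the orientation of the outward normals $\nv_i^m$ and handle the on-boundary evaluation so that the sign and the constant $1/(4D_{\G}t)$ of the differentiated (double-layer-type) kernel agree with the convention already used in \eqref{eq:cnins-2} and \eqref{eq:cnglucoseS0} — this is the same $1/2$-type factor that is implicit there — and the sum $\sum_{m=1}^{M_{N_b}}$ over the blood-vessel meshes must be kept. It is also worth noting that the Dirichlet condition $c_{\G}|_{\parOi}=0$ is automatically respected, because the corresponding integrand in \eqref{eq:generalBIE} carries the normal flux rather than the (vanishing) concentration. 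Convergence of the series $\sum_n c_n$ is inherited from \thref{th:BIEsolution} and is corroborated numerically in \secref{sec:numres}.
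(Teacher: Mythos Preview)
Your proposal is correct and is exactly the approach the paper takes: \coref{co:cGS1} is stated without a separate proof and is obtained by directly specializing \thref{th:BIEsolution} (via \leref{th:1}) to the state-$S_1$ boundary data, with $D=D_{\G}$, $k_{\dr}=0$, the blood vessel as the single Robin boundary, and the $N_b-1$ cells as Dirichlet boundaries. Your identification of $c_0$ with \eqref{eq:c0glucose} and the split of \eqref{eq:cngeneral} into the cell (Dirichlet) and blood-vessel (Robin) contributions matches the paper's derivation line for line.
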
  
 
\subsubsection{Glucose concentration} \label{sec:anaglu}
Since the glucose concentration derived for each state corresponds to an impulse response, the output concentration can be obtained via convolution of the input concentration and this impulse response. 
\begin{theorem}
    The glucose concentration can be derived as \eqref{eq:cGsolution}, shown at the top of next page, where $\int_\Omega(\cdot)\mathrm{d}\dv$ represents the volume integral over $\Omega$.
\end{theorem}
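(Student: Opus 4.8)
The plan is to exploit the linearity of the glucose diffusion equation \eqref{eq:cG} together with the observation that the cell boundary conditions are piecewise constant in time, switching at $t = t_{0_{\I}}+\tau_{\G_1}$ from the reflective condition \eqref{eq:BC-reflective} (state $S_0$, whose impulse response is given in \coref{co:cGS0}) to the absorbing condition \eqref{eq:BC-absorbing} (state $S_1$, whose impulse response is given in \coref{co:cGS1}), while the blood-vessel condition \eqref{eq:BC} holds throughout. Since \eqref{eq:cG} is linear in $c_{\G}$, the response to the glucose source $F_{\G}$ is the superposition of impulse responses weighted by the source, i.e.\ the space-time convolution of $F_{\G}$ with the (time-varying) impulse response of the glucose channel; the volume integral $\int_{\Omega}(\cdot)\,\mathrm{d}\dv$ appearing in \eqref{eq:cGsolution} is exactly this spatial superposition, as explained below.

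First I would handle the regime $t < t_{0_{\I}}+\tau_{\G_1}$, in which every cell is reflective for the entire elapsed time: the impulse response is then exactly $c^{S_0}_{\G}(\dv,t|\dv_{0_{\G}},t_{0_{\G}})$ of \coref{co:cGS0}, so the glucose concentration is the convolution of $F_{\G}$ with this response (which collapses to a shifted copy of it when $F_{\G}$ is the impulse in \eqref{eq:glucoseIC}). Next I would treat the regime $t \ge t_{0_{\I}}+\tau_{\G_1}$. The key step is uniqueness for the parabolic initial–boundary value problem: the field $c^{S_0}_{\G}(\dv'', t_{0_{\I}}+\tau_{\G_1}|\dv_{0_{\G}},t_{0_{\G}})$, built up under the $S_0$ dynamics until the switching instant, is a spatially distributed initial datum for the $S_1$ dynamics on $[t_{0_{\I}}+\tau_{\G_1},\,t]$. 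An initial datum supported on $\Omega$ is equivalent to the impulsive source $c^{S_0}_{\G}(\dv'', t_{0_{\I}}+\tau_{\G_1}|\dv_{0_{\G}},t_{0_{\G}})\,\delta\!\big(t-(t_{0_{\I}}+\tau_{\G_1})\big)$ placed at each $\dv''$, so propagating it forward amounts to convolving with the $S_1$ impulse response $c^{S_1}_{\G}(\dv,t|\dv'',t_{0_{\I}}+\tau_{\G_1})$ from \coref{co:cGS1} and integrating over $\dv''\in\Omega$. Collecting the two regimes then yields \eqref{eq:cGsolution}.

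The main obstacle I anticipate is rigorously justifying the matching at $\tau_{\G_1}$: one must verify that $c_{\G}$ is continuous across the switch — so that the $S_0$ solution at the switching time is an admissible initial condition for the $S_1$ problem — even though the normal derivative of $c_{\G}$ at the cell membranes jumps there, and that re-expanding this distributed datum through the Adomian series of \coref{co:cGS1}, whose leading term \eqref{eq:c0glucose} is written for a point source, remains valid after the spatial convolution, i.e.\ that the series summation and the integral $\int_{\Omega}\mathrm{d}\dv$ may be interchanged. The remaining steps — re-centering the $S_1$ impulse response at $t_{0_{\I}}+\tau_{\G_1}$ rather than at $t_{0_{\G}}$, checking that the blood-vessel boundary term is unaffected by the switch, and, should the cells revert to the reflective state after the glucose-influx window of length $\tau_{\G_3}$ closes, iterating the same matching argument at that later instant — are routine.
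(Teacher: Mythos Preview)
Your approach is essentially the paper's: split time at the state-transition instants, evolve under the $S_0$ impulse response up to the first switch, then re-use that concentration profile as a spatially distributed initial datum for the $S_1$ dynamics (whence the volume integral $\int_\Omega(\cdot)\,\mathrm{d}\dv$), and iterate once more for the return to $S_0$. The paper's own proof is in fact a good deal terser than your sketch and does not spell out the linearity, uniqueness, or matching considerations you raise.

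One correction, however: the first switch occurs at $t_{0_{\I}}+\tau_{\G_1}+\tau_{\G_2}$, not $t_{0_{\I}}+\tau_{\G_1}$ --- after insulin reception the GLUT4 translocation delay $\tau_{\G_2}$ must still elapse before the cells become absorbing (see the discussion preceding \eqref{eq:BC-reflective} and \eqref{eq:BC-absorbing}), and the reversion to $S_0$ then occurs at $t_{0_{\I}}+\tau_{\G_1}+\tau_{\G_2}+\tau_{\G_3}$. With those instants in place, your argument reproduces all three branches of \eqref{eq:cGsolution}.
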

\begin{proof}
    When an impulsive input of insulin is introduced, the glucose subsystem, which is initially in state $S_0$, transitions to $S_1$ at time $t_{0_{\I}}+\tau_{\G_1}+\tau_{\G_2}$, and then reverts back to $S_0$ at time $t_{0_{\I}}+\tau_{\G_1}+\tau_{\G_2}+\tau_{\G_3}$, when the available insulin is depleted. Putting together all the impulse responses of each state, the solution to the PDE in \eqref{eq:cG} can be obtained as in \eqref{eq:cGsolution}.
\end{proof}

\begin{figure*}
\vspace{-.5cm}
    \begin{align}
    c_{\G}(\dv,t|\dv_{0_{\G}},t_{0_{\G}}) = \begin{cases}
        &F_{\G}(\dv_{0_{\G}},t_{0_{\G}})*c^{S_0}_{\G}(\dv,t|\dv_{0_{\G}},t_{0_{\G}}),\hfill\hspace{1cm} 0\leq t< t_{0_{\I}}+\tau_{\G_1}+\tau_{\G_2}\\
        &\int_{\Omega} \{F_{\G}(\dv_{0_{\G}},t_{0_{\G}})*c^{S_0}_{\G}(\dv',t_{0_{\I}}+\tau_{\G_1}+\tau_{\G_2}|\dv_{0_{\G}},t_{0_{\G}})\}*c^{S_1}_{\G}(\dv,t|\dv',t_{0_{\I}}+\tau_{\G_1}+\tau_{\G_2})\dr\dv',\hspace{1cm}\hfill\\&\hfill\hspace{2cm} t_{0_{\I}}+\tau_{\G_1}+\tau_{\G_2}\leq t< t_{0_{\I}}+\tau_{\G_1}+\tau_{\G_2}+\tau_{\G_3}\\
        &\int_{\Omega}\Big[\int_{\Omega} \{F_{\G}(\dv_{0_{\G}},t_{0_{\G}})*c^{S_0}_{\G}(\dv',t_{0_{\I}}+\tau_{\G_1}+\tau_{\G_2}|\dv_{0_{\G}},t_{0_{\G}})\}\\&\hspace{1cm}*c^{S_1}_{\G}(\dv'',t_{0_{\I}}+\tau_{\G_1}+\tau_{\G_2}+\tau_{\G_3}|\dv',t_{0_{\I}}+\tau_{\G_1}+\tau_{\G_2})\dr\dv'\Big]\\&\hspace{1cm}* c^{S_0}_{\G}(\dv,t|\dv'',t_{0_{\I}}+\tau_{\G_1}+\tau_{\G_2}+\tau_{\G_3})\dr\dv'',\hfill t\geq t_{0_{\I}}+\tau_{\G_1}+\tau_{\G_2}+\tau_{\G_3}
    \end{cases}\label{eq:cGsolution}
\end{align} 
\end{figure*}

\subsubsection{Detection of blood sugar level by the CGM}\label{sec:anaCGM}
The CGM detects the glucose concentrations within the subcutaneous layer. Modeling the CGM as a passive receiver, the glucose concentration detected by the CGM can be derived as
\begin{align}
    c_{\G}(\dv\in\parO_{\CGM},t|\dv_{0_{\G}},t_{0_{\G}})=\dfrac{\int_{\dv\in\parO_{\CGM}} c_{\G}(\dv,t|\dv_{0_{\G}},t_{0_{\G}}) \mathrm{d}\dv}{V_{\CGM}},\label{eq:CGM}
\end{align}where $\parO_{\CGM}$ is the set of all points on the CGM boundary and $V_{\CGM}$ is the volume of the CGM detection area.

\section{Insulin Injection Time Optimization\label{sec:opt}}
Effectively managing postprandial glucose levels is crucial for individuals with T1DM to maintain overall health and prevent both short-term and long-term complications. The management of postprandial glucose levels mainly focuses on regulating the peak value of glucose, typically accomplished by injecting insulin before meals. However, patients frequently miss these injections or fail to follow the prescribed timing. Therefore, we aim to identify the optimal window for insulin administration to ensure effective control of post-meal glucose levels, even when delays in injection occur. The ideal window for insulin injection can be mathematically represented as a specific range, which is defined by \begin{align}
    \tau_{0_{\I}} = \{t\in\mathbb{R}^+|0\leq t \leq \hat{t}_{0_{\I}}\},\label{eq:tauI0}
\end{align} where $\hat{t}_{0_{\I}}$ is an upper bound on the insulin injection time after a meal, and $t=0$ (i.e., the lower bound) is the time when the patient takes his/her meal.


Here, parameter $\hat{t}_{0_{\I}}$ is optimized to maximize the time interval between a meal and an insulin injection, while ensuring that post-meal glucose levels remain below a predefined threshold, $\phi$. Logically, the upper bound occurs when the peak glucose concentration is closest to the glucose threshold value. The distance between the peak glucose concentration and the threshold can be represented by the absolute difference $|c_{\pk}(t_{0_{\I}})-\phi|$. So, by minimizing the absolute difference, we can obtain the upper bound value. Thus, the corresponding optimization problem can be formulated as follows:\begin{mini}|s|{t_{0_{\I}}}{f(t_{0_{\I}})=|c_{\pk}(t_{0_{\I}})-\phi|}{\label{eq:opProb}}{}
    \addConstraint{t_{0_{\I}}}{\geq 0},
\end{mini} where $c_{\pk}(t_{0_{\I}})$ is the peak glucose concentration detected by the CGM when the insulin is injected at $t_{0_{\I}}$ and the optimal solution is exactly the upper bound $\hat{t}_{0_{\I}}$ of the insulin injection time. As the peak concentration in \eqref{eq:opProb} corresponds to the maximum concentration, the optimization problem can be reformulated as\begin{align}
    \min_{t_{0_{\I}}} |\max_{t} \{c_{\G}(\dv_{\CGM},t|\dv_{0_{\G}},0,t_{0_{\I}})\}-\phi|,\label{eq:minimax}
\end{align}where $\dv_{\CGM}$ is the midpoint of the CGM.\footnote{We only considered a single point as a receiver for simplicity of the subsequent analysis.} Therefore, the problem can be divided into determining the peak glucose concentration and subsequently deriving the optimal insulin injection time.

Deriving a closed-form expression for $\hat{t}_{0_{\I}}$ will allow to analyze the mathematical relationships and phenomena associated with T1DM. However, it is a great challenge to obtain a closed-form expression for $\hat{t}_{0_{\I}}$. As the blood vessel serves solely as a barrier and does not participate in the insulin-glucose mechanism, we first omit the impact of the blood vessel to reduce complexity and derive a closed-form expression for $\hat{t}_{0_{\I}}$ for this special case in \secref{sec:opcase}. Then, we numerically calculate $\hat{t}_{0_{\I}}$ using the gradient descent ascent (GDA) algorithm for the case when a blood vessel is present (i.e., the general case) in \secref{sec:generalOpins}.

\subsection{Special Case\label{sec:opcase}}
By disregarding the influence of the blood vessel (i.e., $k_{\ai}=0$)\footnote{Note that this simplification results in $c_n(\dv,t|\dv_{0_{\G}},t_{0_{\G}})=0$ for any $n\geq2$.} and then replacing the convolution with the approximation\footnote{As $f(t)*g(t) = \int_{0}^{t}f(\tau)g(t-\tau)d\tau$, we approximate the integral using the rectangular rule.} $f(t)*g(t) \approx tf(\frac{t}{2})g(\frac{t}{2})$ and incorporating the constraint $t_{0_{\G}}=0$, \eqref{eq:cnglucoseS0} becomes \begin{align}
    c_{\G}^{S_0}(&\dv,t|\dv_{0_{\G}},0) = \dfrac{1}{(4\pi D_{\G}t)^{3/2}}\exp{\bigg( -\dfrac{||\dv_{0_{\G}}-\dv||^2}{4D_{\G}t}\bigg)} \nonumber\\&+ \sum_{i=1}^{N_b}\sum_{m=1}^{M_i} \dfrac{\mathcal{S}_i^m(\dv_i^m-\dv)\cdot\nv_i^m}{16(\pi D_{\G}t)^{3}}\nonumber\\&\times\exp{\bigg( -\dfrac{||\dv_{0_{\G}}-\dv_i^m||^2+||\dv_i^m-\dv||^2}{2D_{\G}t}\bigg)},\label{eq:cglucoseS0reduced}
\end{align} and \eqref{eq:cnglucoseS1} becomes \begin{align}
    c_{\G}^{S_1}(&\dv,t|\dv_{0_{\G}},0) = \dfrac{1}{(4\pi D_{\G}t)^{3/2}}\exp{\bigg( -\dfrac{||\dv_{0_{\G}}-\dv||^2}{4D_{\G}t}\bigg)} \nonumber\\&- \sum_{i=1}^{N_b}\sum_{m=1}^{M_i} \dfrac{\mathcal{S}_i^m(\dv_{0_{\G}}-\dv_i^m)\cdot\nv_i^m}{16(\pi D_{\G}t)^{3}}\nonumber\\&\times\exp{\bigg( -\dfrac{||\dv_{0_{\G}}-\dv_i^m||^2+||\dv_i^m-\dv||^2}{2D_{\G}t}\bigg)}.\label{eq:cglucoseS1reduced}
\end{align}

\subsubsection{Peak Time and Concentration\label{sec:peaktc}}
We can derive an expression for the peak glucose concentration by focusing on the maximization term in \eqref{eq:minimax}. The examination of the reduced glucose concentration equations for state $S_0$ in \eqref{eq:cglucoseS0reduced} and state $S_1$ in \eqref{eq:cglucoseS1reduced} reveals an elevation in molecule concentration from the baseline during state $S_0$ and a decrease during state $S_1$. This observation implies that the peak time occurs during state $S_0$ and reduces the complexity of \eqref{eq:cGsolution} when seeking the peak concentration. In particular, we can restrict \eqref{eq:cGsolution} specifically for this analysis to \begin{align}
    c_{\G}(\dv,t|\dv_{0_{\G}},0) = F_{\G}(\dv_{0_{\G}},t_{0_{\G}})*c^{S_0}_{\G}(\dv,t|\dv_{0_{\G}},0). \label{eq:cGsolutionreduce}
\end{align} 

Considering a normalized impulsive glucose input, i.e., $F_{\G}(\dv_{0_{\G}},t_{0_{\G}})=\delta(\dv-\dv_{0_{\G}})\delta(t)$, we provide the approximate time $t_{\pk}$ at which the glucose concentration reaches its maximum in the following theorem. Note that to obtain a closed-form solution, we resort to approximating $\frac{a}{t^{n/2}}\exp(-\frac{b}{t})$, which reduces the equation from a higher-order equation to a solvable cubic equation. This leads to an analysis that is more tractable while capturing the essential behavior of the original function when $a\exp(-b)$ is the dominating term.
\begin{theorem}\label{th:3}
    The approximate time at which the glucose concentration reaches its maximum is obtained as \begin{align}
        t_{\pk} = \dfrac{t_{\pk}^{S_0}+t_{0_{\I}}+\tau-|t_{\pk}^{S_0}-(t_{0_{\I}}+\tau)|}{2},\label{eq:tpeak}
    \end{align} where $\tau$ is the approximate time when the system transitions from state $S_0$ to state $S_1$ after an insulin injection, i.e., $\tau = \tau_{\G_1}+\tau_{\G_2}$,\footnote{We note that $\tau_{\G_1}$ can be obtained numerically by setting \eqref{eq:NI} to one.} and $t_{\pk}^{S_0}$ is obtained as follows
    \begin{align}
        t_{\pk}^{S_0} = -\dfrac{1}{3A_{1}}\bigg(A_{2}+\xi\rho-\dfrac{\Delta_0}{\xi\rho}\bigg), \label{eq:tpeakS0}
    \end{align} with
    \addtocounter{equation}{-1}
    \begin{subequations}
        \begin{align}
        A_{1} =& -\dfrac{3}{16(\pi D_{\G})^{3/2}}\exp{\bigg( -\dfrac{||\dv_{0_{\G}}-\dv_{\CGM}||^2}{4D_{\G}}\bigg)},\end{align}
        \begin{align}
        A_{2} =& \dfrac{||\dv_{0_{\G}}-\dv_{\CGM}||^2}{32D_{\G}(\pi D_{\G})^{3/2}}\exp{\bigg( -\dfrac{||\dv_{0_{\G}}-\dv_{\CGM}||^2}{4D_{\G}}\bigg)},\end{align}
        \begin{align}
        \xi =& \dfrac{-1+\sqrt{3}}{2},\end{align}
        \begin{align}
        \rho =& \sqrt[3]{\dfrac{\Delta_0-\sqrt{\Delta_1^2-4\Delta_0^2}}{2}},\end{align}
        \begin{align}
        \Delta_0 =& A_{2}^2-3A_{1}A_{3},\end{align}
        \begin{align}
        \Delta_1 =& 2A_{2}^2-9A_{1}A_{3}+27A_{1}^2A_{4},\end{align}
        \begin{align}
        A_{3} =& \sum_{i=1}^{N_b}\sum_{m=1}^{M_i} -\dfrac{3\mathcal{S}_i^m(\dv_i^m-\dv_{\CGM})\cdot\nv_i^m}{32(\pi D_{\G})^3}\nonumber\\&\times\exp{\bigg(-\dfrac{||\dv_i^m-\dv_{0_{\G}}||^2+||\dv_i^m-\dv_{\CGM}||^2}{2D_{\G}}\bigg)},\end{align}\vspace{-0.5cm}
        \begin{align}
        A_{4} =& \sum_{i=1}^{N_b}\sum_{m=1}^{M_i} \dfrac{\mathcal{S}_i^m(\dv_i^m-\dv_{\CGM})\cdot\nv_i^m}{32D_{\G}(\pi D_{\G})^3}\nonumber\\&\times\Big(||\dv_i^m-\dv_{0_{\G}}||^2+||\dv_i^m-\dv_{\CGM}||^2\Big)\nonumber\\&\times\exp{\bigg(-\dfrac{||\dv_i^m-\dv_{0_{\G}}||^2+||\dv_i^m-\dv_{\CGM}||^2}{2D_{\G}}\bigg)}.
    \end{align}
    \end{subequations}
\end{theorem}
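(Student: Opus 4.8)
The statement combines two separate claims — the $\min$-representation \eqref{eq:tpeak} of $t_{\pk}$ and the cubic-root formula \eqref{eq:tpeakS0} for $t_{\pk}^{S_0}$ — so the plan is to prove them one at a time and then put them together. For the first, I would start from the normalized impulsive meal input $F_{\G}(\dv_{0_{\G}},t_{0_{\G}})=\delta(\dv-\dv_{0_{\G}})\delta(t)$, for which the convolution in \eqref{eq:cGsolutionreduce} collapses to $c_{\G}(\dv_{\CGM},t|\dv_{0_{\G}},0)=c_{\G}^{S_0}(\dv_{\CGM},t|\dv_{0_{\G}},0)$, valid while the cells are reflective, i.e.\ up to the state transition at $t=t_{0_{\I}}+\tau$ with $\tau=\tau_{\G_1}+\tau_{\G_2}$ (the transition time identified in the derivation of \eqref{eq:cGsolution}). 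Using that $t\mapsto c_{\G}^{S_0}(\dv_{\CGM},t|\dv_{0_{\G}},0)$ inherits the unimodal shape of the free-diffusion pulse in \eqref{eq:cglucoseS0reduced} — with a single maximizer $t_{\pk}^{S_0}$ — while the concentration decays throughout state $S_1$ (the elevation/decay observation made just above \eqref{eq:cGsolutionreduce}), a two-case argument follows: if $t_{\pk}^{S_0}\le t_{0_{\I}}+\tau$ the maximum is reached inside $S_0$ at $t_{\pk}^{S_0}$; if $t_{\pk}^{S_0}>t_{0_{\I}}+\tau$ the concentration is still rising at the switch and then only decreases, so the maximum is at $t_{0_{\I}}+\tau$. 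Hence $t_{\pk}=\min\{t_{\pk}^{S_0},t_{0_{\I}}+\tau\}$, and substituting the identity $\min\{a,b\}=\tfrac12\bigl(a+b-|a-b|\bigr)$ gives \eqref{eq:tpeak}.

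For the second claim I would maximize $g(t):=c_{\G}^{S_0}(\dv_{\CGM},t|\dv_{0_{\G}},0)$ from \eqref{eq:cglucoseS0reduced} with $\dv=\dv_{\CGM}$. Each summand of $g$ has the form $a\,t^{-n/2}e^{-b/t}$: the source term with $n=3$, $b=\|\dv_{0_{\G}}-\dv_{\CGM}\|^2/(4D_{\G})$, and each mesh term with $n=6$, $b=(\|\dv_{0_{\G}}-\dv_i^m\|^2+\|\dv_i^m-\dv_{\CGM}\|^2)/(2D_{\G})$. Since $\tfrac{d}{dt}\bigl(a\,t^{-n/2}e^{-b/t}\bigr)=a\,t^{-n/2-2}\bigl(b-\tfrac n2 t\bigr)e^{-b/t}$, the exact stationarity condition $g'(t)=0$ mixes incommensurable powers of $t$ and distinct exponentials $e^{-b/t}$ and is transcendental. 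The crucial move is the approximation announced in the statement: in each term the $t$-dependent factor $t^{-n/2}e^{-b/t}$ is replaced by its dominant value $e^{-b}$ near $t\simeq 1$ — keeping the algebraic power of $t$ needed to put the source and mesh contributions on a common degree — which turns $g'(t)=0$ into a polynomial identity. Collecting coefficients (the source term producing the $t^3$ and $t^2$ coefficients, the mesh sums the $t^1$ and $t^0$ coefficients) yields the cubic $A_1t^3+A_2t^2+A_3t+A_4=0$ with $A_1,\dots,A_4$ exactly as displayed — the surviving exponentials $e^{-\|\dv_{0_{\G}}-\dv_{\CGM}\|^2/(4D_{\G})}$ and $e^{-(\cdots)/(2D_{\G})}$ being $e^{-b}$ evaluated at $t=1$. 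Applying Cardano's formula, with $\Delta_0,\Delta_1,\xi,\rho$ as defined in the statement, one real root is $-\tfrac{1}{3A_1}\bigl(A_2+\xi\rho-\Delta_0/(\xi\rho)\bigr)$, i.e.\ \eqref{eq:tpeakS0}, with the constant $\xi$ selecting the branch that gives the physically meaningful (smallest positive) root. Combining the two parts proves the theorem.

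The hard part will be the approximation of $a\,t^{-n/2}e^{-b/t}$: the substitute has to simultaneously (i) remove the non-polynomial structure, (ii) degree-match the source and boundary contributions so that precisely a cubic — not a higher-degree polynomial — remains, and (iii) preserve the maximizer's location to leading order, which is exactly the regime in which $a e^{-b}$ is the dominant contribution; once that is fixed, checking that the bookkeeping reproduces $A_1,\dots,A_4$ verbatim is routine though tedious. A secondary difficulty is making rigorous the unimodality and the ``decay during $S_1$'' claim invoked in the first part, and confirming that the prescribed branch $(\xi,\rho)$ of Cardano's formula indeed returns the relevant positive root rather than a spurious complex or negative one.
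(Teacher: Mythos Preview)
Your plan is correct and matches the paper's proof: the $\min$ argument for \eqref{eq:tpeak} and the differentiate--approximate--Cardano pipeline for \eqref{eq:tpeakS0} are exactly what Appendix~\ref{ap:3} does. The only place where you are (deliberately) vague and the paper is concrete is the power remapping: the paper replaces $\tfrac{a}{t^{n/2}}e^{-b/t}$ by $\tfrac{a}{t^{\lceil 0.7n\rceil}}e^{-b}$, so that the four derivative terms with powers $t^{-5/2},t^{-7/2},t^{-4},t^{-5}$ become $t^{-4/3},t^{-5/3},t^{-2},t^{-7/3}$ times constants, and clearing $t^{-7/3}$ leaves the cubic $A_1t^3+A_2t^2+A_3t-A_4=0$ to which Cardano is applied.
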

\begin{proof}
    See \apref{ap:3}.
\end{proof}

Finally, the objective function for the optimization problem in \eqref{eq:minimax} can be expressed as \begin{align}
        f(t_{0_{\I}})=|c_{\G}^{S_0}(\dv_{\CGM},t_{\pk}|\dv_{0_{\G}},0,t_{0_{\I}})-\phi|.\label{eq:ft0}
    \end{align}

\subsubsection{Optimal Insulin Injection Time\label{sec:optime}}
To determine the optimal insulin injection time, we convert the constrained problem in \eqref{eq:opProb} into an unconstrained problem by employing the method of Lagrange multipliers \cite{boyd2004convex} as\begin{align}
    \mathcal{L}(t_{0_{\I}},\lambda,s) = f(t_{0_{\I}})-\lambda(t_{0_{\I}} - s^2),
\end{align}where $\lambda$ is the Lagrange multiplier and $s^2$ is used to ensure the positivity of the equality constraint. To find the minimum of function $f(t_{0_{\I}})$, we set all partial derivatives to zero and arrive at \begin{subequations}
    \begin{align}
        \dfrac{\partial f(t_{0_{\I}})}{\partial t_{0_{\I}}} - \lambda &= 0,\label{eq:lag1}\\
        (t_{0_{\I}} - s^2) &= 0,\label{eq:lag2}\\
        2\lambda s &= 0.\label{eq:lag3}
    \end{align}
\end{subequations} Eq.~\eqref{eq:lag3} implies that two cases need to be considered: 
\begin{itemize}
    \item \textbf{Case 1:} When $s = 0$, \eqref{eq:lag2} indicates $t_{0_{\I}} = 0$. Substituting $t_{0_{\I}} = 0$ into function $f(t_{0_{\I}})$ satisfies the stationary condition but does not yield a minimum point.
    \item \textbf{Case 2:} When $\lambda = 0$, \eqref{eq:lag1} implies $\frac{\partial f(t_{0_{\I}})}{\partial t_{0_{\I}}} = 0$, and solving this equation leads to the minimum for the optimization problem.
\end{itemize}

\begin{theorem}\label{th:4}
    The approximate maximum insulin injection time after a meal with the postprandial glucose level below threshold $\phi$ is obtained as \begin{align}
        \hat{t}_{0_{\I}} = \begin{cases}
            \infty,\hfill c_{\G}^{S_0}(\dv_{\CGM},t_{\pk}^{S_0}|\dv_{0_{\G}},0)<\phi,\\
            t_{\pk}^{S_0}-\tau,\hspace{1cm}\hfill c_{\G}^{S_0}(\dv_{\CGM},t_{\pk}^{S_0}|\dv_{0_{\G}},0)=\phi,\\
            \dfrac{-B_{1}\pm\sqrt{(B_{1})^2-4B_{2}\phi}}{2B_{2}}-\tau, \\\hfill c_{\G}^{S_0}(\dv_{\CGM},t_{\pk}^{S_0}|\dv_{0_{\G}},0)>\phi.
        \end{cases},\label{eq:tI*}
    \end{align}where $\infty$ implies that the maximum insulin injection time is boundless\footnote{This further implies that the glucose level is already below the threshold and that no insulin injection is required.}, 
    \addtocounter{equation}{-1}\begin{subequations}
        \begin{align}
        B_{1} =& \dfrac{1}{(4\pi D_{\G})^{3/2}}\exp{\bigg( -\dfrac{||\dv_{0_{\G}}-\dv||^2}{4D_{\G}}\bigg)},\end{align}
        \begin{align}
        B_{2} =& \sum_{i=1}^{N_b}\sum_{m=1}^{M_i} \dfrac{\mathcal{S}_i^m(\dv_i^m-\dv)\cdot\nv_i^m}{16(\pi D_{\G})^{3}}\nonumber\\&\times\exp{\bigg( -\dfrac{||\dv_{0_{\G}}-\dv_i^m||^2+||\dv_i^m-\dv||^2}{2D_{\G}}\bigg)}.
    \end{align}
    \end{subequations}
\end{theorem}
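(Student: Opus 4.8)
The plan is to start from the objective function $f(t_{0_{\I}})=|c_{\G}^{S_0}(\dv_{\CGM},t_{\pk}|\dv_{0_{\G}},0,t_{0_{\I}})-\phi|$ obtained in \eqref{eq:ft0} together with the stationarity condition from Case~2, namely $\frac{\partial f(t_{0_{\I}})}{\partial t_{0_{\I}}}=0$, and to translate this into an explicit condition on $t_{0_{\I}}$. The key observation is the relationship $t_{\pk} = \frac{t_{\pk}^{S_0}+t_{0_{\I}}+\tau-|t_{\pk}^{S_0}-(t_{0_{\I}}+\tau)|}{2}$ from \thref{th:3}, which is just $t_{\pk}=\min\{t_{\pk}^{S_0},\,t_{0_{\I}}+\tau\}$. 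First I would split the analysis according to whether the unconstrained state-$S_0$ peak time $t_{\pk}^{S_0}$ is reached before or after the state transition at $t_{0_{\I}}+\tau$. If $t_{0_{\I}}+\tau \ge t_{\pk}^{S_0}$, then $t_{\pk}=t_{\pk}^{S_0}$ and the peak glucose concentration seen by the CGM is the fixed value $c_{\G}^{S_0}(\dv_{\CGM},t_{\pk}^{S_0}|\dv_{0_{\G}},0)$, independent of $t_{0_{\I}}$; otherwise $t_{\pk}=t_{0_{\I}}+\tau$ and the peak concentration is a (decreasing in $t$) function of $t_{0_{\I}}+\tau$ evaluated along \eqref{eq:cglucoseS0reduced}.

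Next I would handle the three branches of \eqref{eq:tI*} in turn. In the first branch, if the \emph{maximal possible} peak $c_{\G}^{S_0}(\dv_{\CGM},t_{\pk}^{S_0}|\dv_{0_{\G}},0)$ is already strictly below $\phi$, then the constraint is satisfied for every $t_{0_{\I}}\ge 0$ no matter how late the injection, so the supremum of admissible injection times is $\hat t_{0_{\I}}=\infty$. In the boundary case where this maximal peak equals $\phi$ exactly, the latest injection that still keeps the realized peak at $t_{\pk}^{S_0}$ (and hence equal to $\phi$) is the one for which $t_{0_{\I}}+\tau=t_{\pk}^{S_0}$, giving $\hat t_{0_{\I}}=t_{\pk}^{S_0}-\tau$; any later injection would force $t_{\pk}=t_{0_{\I}}+\tau>t_{\pk}^{S_0}$, but since the concentration is decreasing past the peak this only lowers the glucose level, so this is indeed the threshold of the feasible region expressed as an equality of $c_{\pk}$ with $\phi$. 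For the third branch, where the unconstrained peak would exceed $\phi$, feasibility forces $t_{\pk}=t_{0_{\I}}+\tau<t_{\pk}^{S_0}$, and I would set the realized peak concentration equal to the threshold: substituting $t=t_{0_{\I}}+\tau$ into the reduced expression \eqref{eq:cglucoseS0reduced} for $c_{\G}^{S_0}(\dv_{\CGM},t|\dv_{0_{\G}},0)$, using the same $\frac{a}{t^{n/2}}e^{-b/t}$ approximation as in \thref{th:3} so that the two summand groups collapse to $B_{1}$ and $B_{2}$ (matching the definitions in \eqref{eq:tI*}), yields the quadratic $B_{2}(t_{0_{\I}}+\tau)^2 + B_{1}(t_{0_{\I}}+\tau) = \phi$ — wait, more carefully, after the approximation the leading behavior in $t$ is $B_{1}t^{-3/2}\!\to$ a constant plus $B_{2}t^{-3}$-type terms, and collecting powers of $(t_{0_{\I}}+\tau)$ gives a quadratic in that variable whose roots are $\frac{-B_{1}\pm\sqrt{B_{1}^2-4B_{2}\phi}}{2B_{2}}$; subtracting $\tau$ gives $\hat t_{0_{\I}}$.

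The main obstacle I anticipate is making the algebra of the third branch line up exactly with the stated $B_1,B_2$ and the $\pm$ quadratic form: the reduced $S_0$ concentration in \eqref{eq:cglucoseS0reduced} has one term scaling like $t^{-3/2}$ and another like $t^{-3}$, so setting it equal to $\phi$ is genuinely a polynomial-in-$t^{-1}$ equation rather than a clean quadratic, and I would need to invoke the same dominant-term approximation used to prove \thref{th:3} (replacing $\frac{a}{t^{n/2}}e^{-b/t}$ so that the fractional powers are absorbed) to reduce it to the displayed quadratic in $(t_{0_{\I}}+\tau)$; I would also need to argue that the constraint $t_{0_{\I}}\ge0$ and the Case~2 analysis (which discards the $t_{0_{\I}}=0$ stationary point from Case~1) single out the correct root among the two. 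The remaining steps — identifying $t_{\pk}=\min\{t_{\pk}^{S_0},t_{0_{\I}}+\tau\}$, monotonic decay of $c_{\G}^{S_0}$ past its peak, and the case split on $c_{\G}^{S_0}(\dv_{\CGM},t_{\pk}^{S_0}|\dv_{0_{\G}},0)$ versus $\phi$ — are then routine and follow directly from \thref{th:3} and the form of \eqref{eq:cglucoseS0reduced}.
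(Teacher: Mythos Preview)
Your proposal is correct and follows essentially the same route as the paper: the split $t_{\pk}=\min\{t_{\pk}^{S_0},t_{0_{\I}}+\tau\}$ from \thref{th:3}, the three-way case distinction on $c_{\G}^{S_0}(\dv_{\CGM},t_{\pk}^{S_0}|\dv_{0_{\G}},0)$ versus $\phi$, and the same $\tfrac{a}{t^{n/2}}e^{-b/t}$ approximation to collapse \eqref{eq:cglucoseS0reduced} into a quadratic in $u=t_{0_{\I}}+\tau$ are exactly what the paper does.

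The one methodological difference is in the third branch. You argue directly that feasibility is tight at the optimum, so you set $c_{\G}^{S_0}(\dv_{\CGM},t_{0_{\I}}+\tau|\dv_{0_{\G}},0)=\phi$ and solve. The paper instead returns to the stationarity condition $\tfrac{\partial f}{\partial t_{0_{\I}}}=0$, writes $\tfrac{d}{du}|c_{\pk}(u)-\phi|$ via $|g|'=gg'/|g|$, and obtains the factored equation $(A_{1}u+A_{2}u^{2/3}+A_{3}u^{1/3}+A_{4})(B_{1}u+B_{2}u^{2}-\phi)=0$; the cubic factor reproduces $u=t_{\pk}^{S_0}$ (the $=\phi$ case), while the quadratic factor gives the $>\phi$ case. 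Your direct ``constraint tight'' argument is cleaner and more transparent; the paper's factored form has the mild advantage of showing that both the equality and strict-inequality branches fall out of the single stationarity equation. Your anticipated obstacle---getting the fractional powers to line up as a genuine quadratic---is real, and the paper resolves it exactly as you guess, by extending the approximation of \thref{th:3} to the additional exponents $\tfrac{a}{t^{3/2}}e^{-b/t}\approx \tfrac{a e^{-b}}{t}$ and $\tfrac{a}{t^{3}}e^{-b/t}\approx \tfrac{a e^{-b}}{t^{2}}$.
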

\begin{proof}
    See \apref{ap:4}.
\end{proof}

Note that, similar to \thref{th:3}, we resort to approximating $\frac{a}{t^{n/2}}\exp(-\frac{b}{t})$, which reduces the equation to a solvable quadratic or cubic equation. This leads to an approximated analysis with an error that is negligible when $a\exp(-b)$ is the dominating term.

In \figref{fig:t1dmEq}, we illustrate the relationship between the subsystems and the optimization problem.
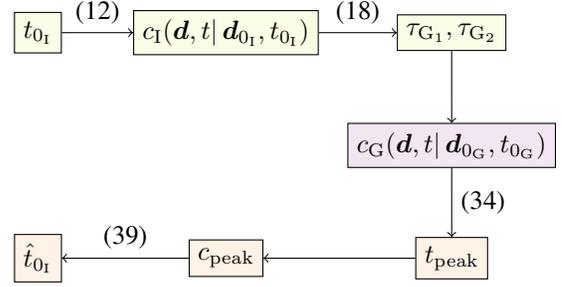
\begin{figure}
    \centering
    \tikzset{
        ->, 
        node distance=1.5cm,
        }
    \begin{tikzpicture}[scale=0.6]
        \node[fill=lime!10 , draw] (tI) {$t_{0_{\I}}$};
        \node[node distance=2.5cm,right of = tI,fill=lime!10 , draw] (cI) {$c_{\I}(\dv,t|\dv_{0_{\I}},t_{0_{\I}})$};
        \node[node distance=3cm,right of = cI,fill=lime!10 , draw] (t1) {$\tau_{\G_1}, \tau_{\G_2}$};

        \draw 
            (tI) edge[above] node{\eqref{eq:BIEsolution}} (cI)
            (cI) edge[above] node{\eqref{eq:NI}} (t1)
        ;
        
        \node[below of = t1,fill= violet!10 , draw] (cG) {$c_{\G}(\dv,t|\dv_{0_{\G}},t_{0_{\G}})$};

        \draw 
            (t1) edge[] node{} (cG)
        ;

        \node[below of = cG,fill= orange!10, draw] (tpeak) {$t_{\pk}$};
        \node[node distance=3cm,left of = tpeak,fill= orange!10, draw] (cpeak) {$c_{\pk}$};
        \node[node distance=2.5cm,left of = cpeak,fill= orange!10, draw] (tI*) {$\hat{t}_{0_{\I}}$};

        \draw 
            (cG) edge[right] node{\eqref{eq:tpeak}} (tpeak)
            (tpeak) edge[above] (cpeak)
            (cpeak) edge[above] node{\eqref{eq:tI*}} (tI*)
        ;
    \end{tikzpicture}
    \caption{Relationship of subsystem analyses in T1DM system and their relation to the considered optimization problem. The green, purple, and orange blocks originate from the insulin subsystem, the glucose subsystem, and the optimization problem, respectively.}
    \label{fig:t1dmEq}
\end{figure}
\subsection{General Case\label{sec:generalOpins}}
\begin{algorithm}[h!t]
 \caption{Gradient Descent Ascent for finding the optimal insulin injection time}
 \begin{algorithmic}[1]
 \renewcommand{\algorithmicrequire}{}
  \REQUIRE \textit{Calculation of the Peak Time and Concentration}
  \FOR {$t_{0_{\I}}\in T_{0_{\I}}$} 
    \STATE Initialization: $l_{0} \leftarrow 0.0001$
    \STATE $n \leftarrow 0$
    \REPEAT
        \STATE $l_{n+1}\leftarrow l_{n}+\gamma_{1}\nabla_{t} c_{\G}(\dv,l_{n}|\dv_{0_{\G}},t_{0_{\G}},t_{0_{\I}})$
        \STATE $n\leftarrow n+1$
    \UNTIL {$l_{n+1} < l_{n}$}
    \STATE $t_{\pk} \leftarrow l_{n}$
    \STATE $c_{\pk}(t_{0_{\I}})\leftarrow c_{\G}(\dv,t_{\pk}|\dv_{0_{\G}},t_{0_{\G}},t_{0_{\I}})$
  \ENDFOR
    \\\textit{Calculation of the Optimal Insulin Injection Time}
    \STATE Initialization: $m_{0} \leftarrow 0.0001$
    \STATE $n \leftarrow 0$
    \REPEAT
        \STATE $m_{n+1}\leftarrow m_{n}-\gamma_{2}\nabla_{t} f(m_{n})$
        \STATE $n\leftarrow n+1$
    \UNTIL{$m_{n+1} > m_{n}$}
    \STATE $\hat{t}_{0_{\I}} \leftarrow m_{n}$
 \RETURN $\hat{t}_{0_{\I}}$
 \end{algorithmic}
 \label{al:GDA}
 \end{algorithm}
Considering the intricate nature of the glucose concentration in \eqref{eq:cGsolution}, obtaining a closed-form solution to the optimization problem using conventional optimization techniques becomes challenging. Consequently, we resort to numerical methods to address the general case, which encompasses the presence of the blood vessel. Eq.~\eqref{eq:minimax} is in the form of a minimax optimization problem, which is frequently encountered in game theory and optimization\cite{boyd2004convex}. Such a problem involves two adversaries, each striving to minimize or maximize an objective function while anticipating the opponent's actions. In this scenario, the minimizing player typically employs gradient descent to iteratively refine his/her strategy, aiming to reduce his/her cost function. Simultaneously, the maximizing player employs gradient ascent to optimize his/her objective. Through continuous strategy updates based on their respective gradients, the players converge towards a Nash equilibrium, where neither can unilaterally improve their positions. This iterative process mirrors strategic interactions, proving crucial in determining optimal solutions within competitive environments. Given the efficacy of GDA in seeking equilibrium solutions within adversarial settings\cite{lin2020gradient}, we apply this algorithm to tackle \eqref{eq:minimax}. However, owing to the absolute value operation in \eqref{eq:minimax}, we segregate the gradient descent and ascent into two distinct parts rather than conducting them iteratively. \alref{al:GDA} outlines the procedure for calculating the optimal insulin injection time. \alref{al:GDA} first calculates the peak time of the glucose concentration for all possible insulin injection times in the set $T_{0_{\I}}$. In Line 2, the initial weight is initialized to a small non-zero value. Then a loop in Lines 4 to 7 that employs gradient ascent with $\gamma_1$ as the learning rate and $\nabla_t=\frac{\partial}{\partial t}$ as the Nabla operator with respect to time $t$ continues until the maximum weight is reached. This approach aims to identify the maximum of the glucose concentration. The resulting peak time is then used to derive the peak concentration (Line 9). We then employ gradient descent (Lines 13 to 16) with $\gamma_2$ as the learning rate to search for the upper limit of the optimal insulin injection time.

\section{Numerical Results\label{sec:numres}}
In this section, we employ agent-based simulations in Java to validate our T1DM analysis and investigate the impact of various diabetes management strategies on blood sugar levels. By treating each cell as an individual agent, agent-based simulation can be used to more accurately capture the multicellular dynamic compared to traditional equation-based simulations. We have chosen Java instead of MATLAB, as MATLAB is not capable of effectively executing the agent-based simulations specific to our case \cite{Matyjaszkiewicz2017BSimSimulator,jad2023CSK}. The simulation of molecule propagation is based on a particle-based method \cite{Cai2006ModellingMigration, Deng2016MolecularReceiver}, which involves storing the number of molecules and their respective locations. The displacement of molecules in each direction due to diffusion in a 3D fluid environment in one time step $t_s$ is calculated according to an independent Gaussian distribution with variance $2Dt_s$ and zero mean, i.e., $\{\mathcal{N}(0,2Dt_s),\allowbreak \mathcal{N}(0,2Dt_s),\allowbreak \mathcal{N}(0,2Dt_s)\}$, where $\mathcal{N}$ is the normal distribution function. Additionally, molecules that undergo degradation are removed from the environment with a probability of $k_\mathrm{d}t_s$ during the same time step. To account for the partially absorbing boundary condition in \eqref{eq:BC}, a collision with the boundary results in absorption with a probability $k_{\ai}\sqrt{{\pi t_s}/{D}}$ and reflection otherwise.  For the reflective boundary, molecules positioned inside the boundary area are reflected back into the environment. As for the fully absorbing boundary, molecules positioned inside the boundary area are counted and removed from the environment.

The simulation time interval $t_s$ is set as $0.01$s, and all results shown are averaged over $10,000$ realizations. The parameter values adopted for the T1DM system are summarized
in \tabref{tab:parameter}. Unless stated otherwise, we simulate ten cells in the unbounded environment with their centers located at $[0,2,0]$, $[2,2,2]$, $[2,3,3]$, $[4,2,4]$, $[5,4,1]$, $[1,1,3]$, $[1,3,4]$, $[3,1,4]$, $[6,3,1],$ and $[3,2,3]\mu$m. We set the radius of the cells, the radius of the blood vessel, and the length of the blood vessel as 0.5$\mu$m, 0.2$\mu$m, and 15$\mu$m, respectively\footnote{These values were chosen to balance model complexity and computational feasibility, acknowledging that blood vessel sizes vary widely (2 µm to 2 cm) and human cells typically range from 10–100 µm, depending on type and function\cite{scanlon2018essentials}.}. In all the figures, we use “Ana.”, “Sim.”, “Num.”, “S.C.”, and “G.C.” to abbreviate “Analytical”, “Simulation”, “Numerical”, “Special Case”, and “General Case”, respectively. 

\begin{table}[!th]
        \centering
        \caption{Parameter adopted for the considered T1DM system.}
        \label{tab:parameter}
        \begin{tabular}{|c|c|c|c|}
        \hline
             Parameter&Value&Unit &Ref. \\\hline
$D_{\I}$&150&$\mu\mathrm{m}^2/s$&\cite{shorten2007insulin}\\
$k_{\dr_{\I}}$&0.6&$\mathrm{h}^{-1}$&\cite{shorten2007insulin}\\
$D_{\G}$&670&$\mu\mathrm{m}^2/s$&\cite{francke2003phosphotransferase}\\
$k_{\ai}$, $1\leq i<N_b$&20&$\mu\mathrm{m}/s$&$^a$\\
$k_{\mathrm{a}_{N_b}}$&1&$\mu\mathrm{m}/s$&$^a$\\
$\tau_{\G_2}$&0.2&h&$^b$\\
$\tau_{\G_3}$&0.5&h&\cite{ferrannini2012physiology}\\
             \hline
        \end{tabular}
        
        \begin{flushleft}
    
        \footnotesize{$^a$The chosen absorption rate for blood vessels may not reflect practical scenarios; however, these values were chosen as reasonable approximations given the limited literature and variability based on patient conditions. An early peak in glucose absorption occurs approximately 30-40 minutes after ingestion \cite{ferrannini2012physiology}, leading to an absorption rate of 0.85 - 1.13 $\mu$m/s for blood vessels\cite{yu2016high, tordjman2017melatonin}. Cells, being the primary consumers of glucose \cite{el2010dynamic}, absorb it more efficiently through specialized transporter proteins, prompting us to use a higher absorption rate for cells compared to blood vessels.}\vspace{0.15cm}

        \footnotesize{$^b$GLUT4 translocation is a very fast process, and the maximum glucose uptake is achieved after around 10 minutes of insulin binding to its receptor \cite{fazakerley2022glut4}. However, there is still some variety depending on the patient; thus, for simplicity, we chose 0.2h for $\tau_{\G_2}$.}
        \end{flushleft}
\end{table}


\subsection{Validation of T1DM Analysis}  
\begin{figure}
    \centering
    \includegraphics[width=\linewidth]{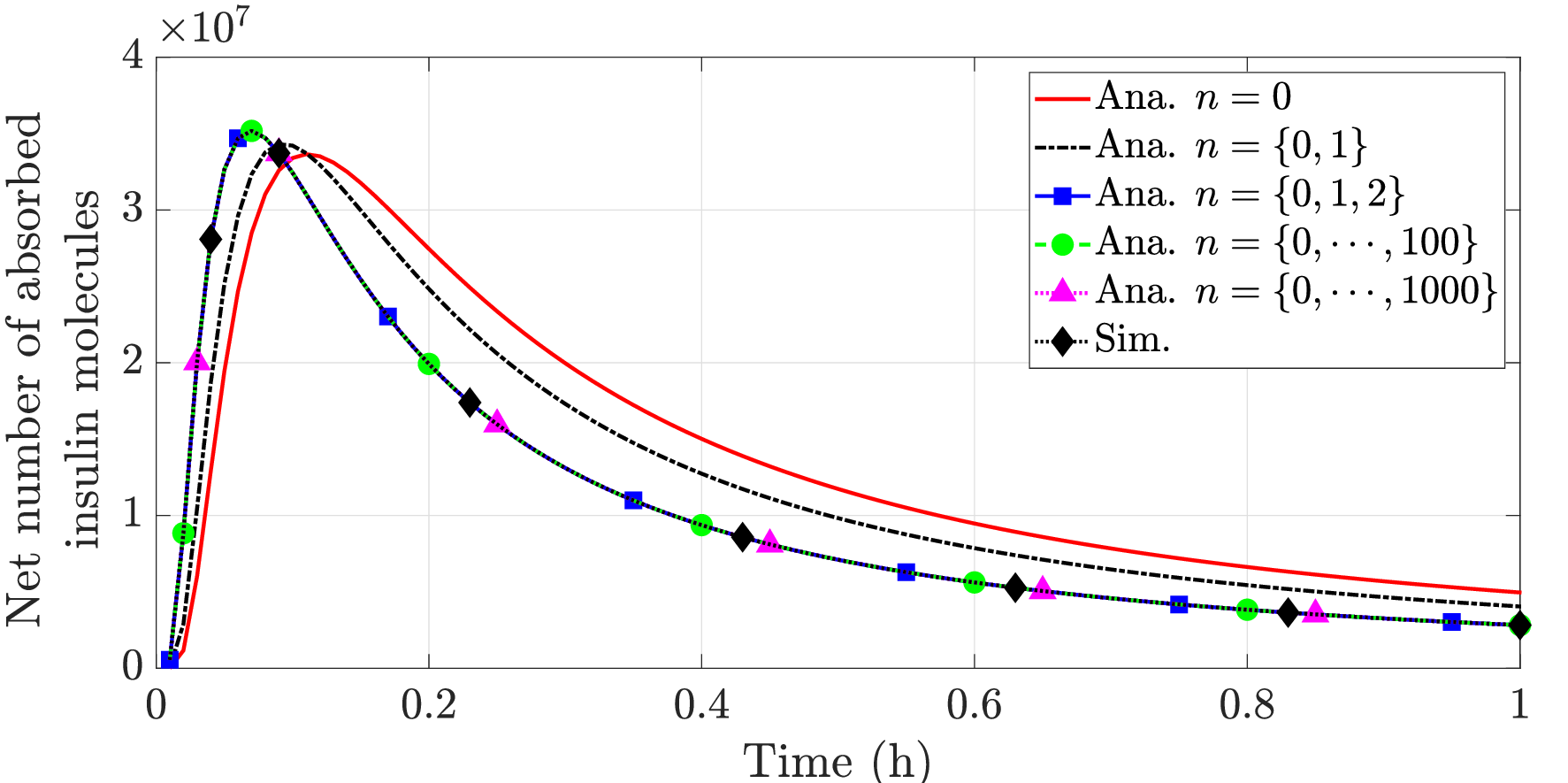}
    \caption{The convergence of the analytical result for the number of insulin molecules absorbed by a cell located at $\dv = [0,2,0]\mu$m for different $n$.}
    \label{fig:nvalue}
\end{figure}

\begin{figure}
    \centering
    \includegraphics[width=\linewidth]{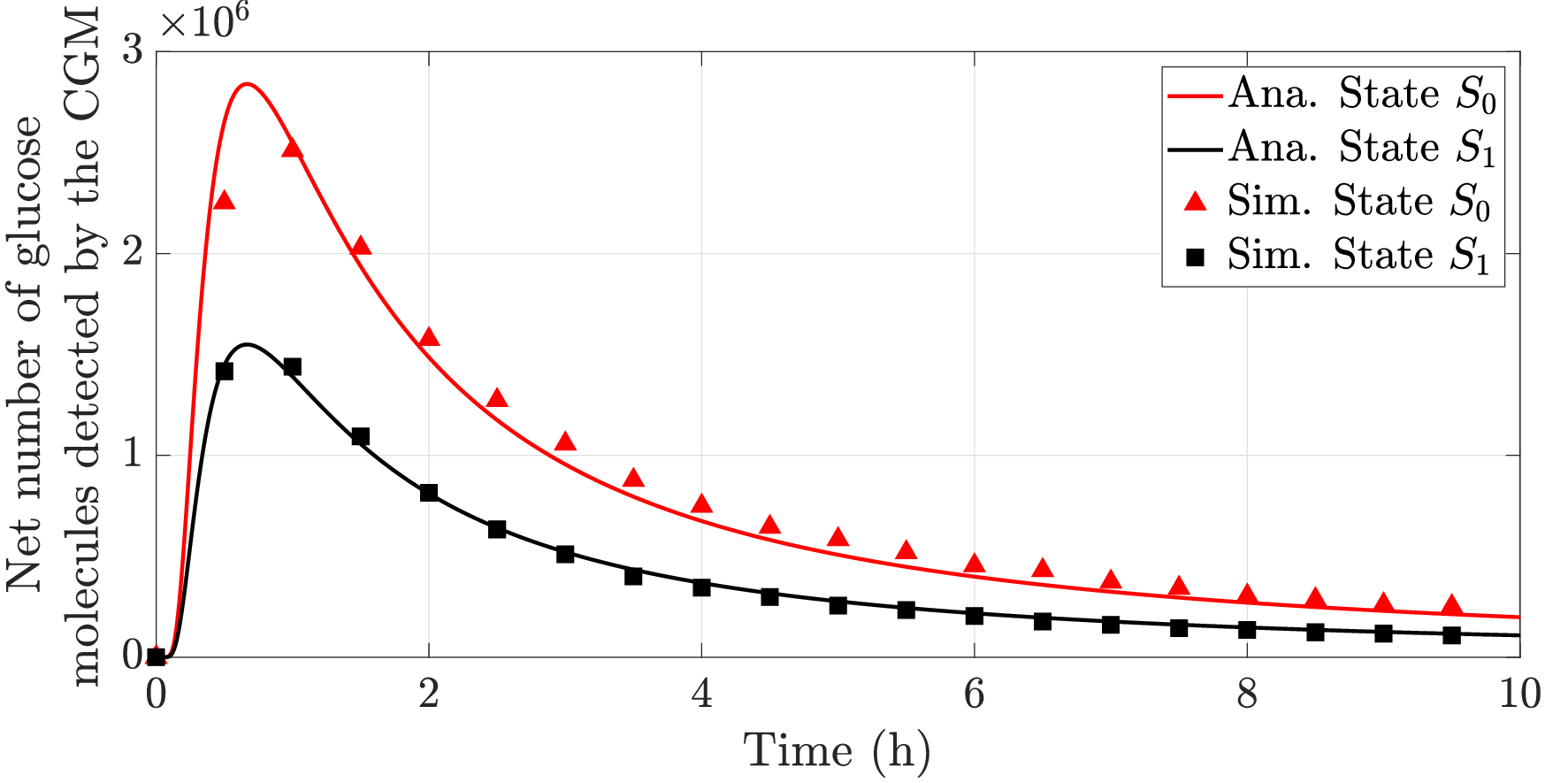}
    \caption{The net number of glucose molecules detected by the CGM for states $S_0$ and $S_1$.}
    \label{fig:glucose}
\end{figure}

\begin{figure}[t]
    \centering
    \includegraphics[width=\linewidth]{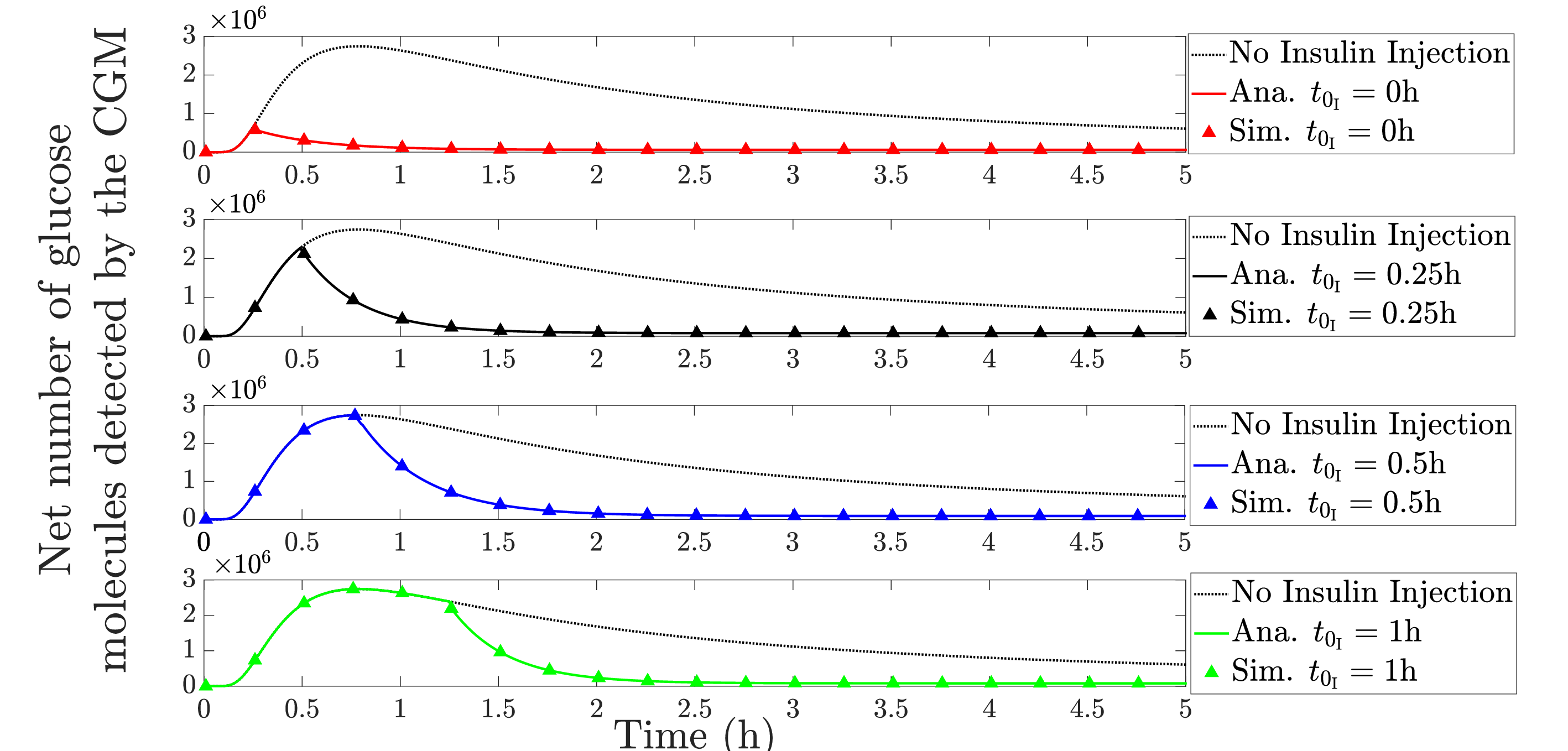}
    \caption{The net number of glucose molecules detected by the CGM for different insulin injection times.}
    \label{fig:T1DM}
\end{figure}

In Figs.~\ref{fig:nvalue}, \ref{fig:glucose}, and \ref{fig:T1DM}, we validate the analysis of the T1DM system presented in \secref{sec:recob}. In these figures, unless stated otherwise, $10^{10}$ insulin and glucose molecules are impulsively released at time $t=0$h at locations $\dv_{0_{\I}} = [0,0,0]\mu$m and $\dv_{0_{\G}} = [0,0,0]\mu$m, respectively. \figref{fig:nvalue} shows simulation results for the proposed insulin subsystem alongside the analytical net number of insulin molecules absorbed by a cell positioned at $\dv = [0,2,0]\mu$m computed using \coref{co:cI} and \eqref{eq:N}, with the upper limit in \eqref{eq:BIEsolution} truncated to the maximum $n$ shown in the legend. \figref{fig:glucose} compares the simulation results for the proposed glucose subsystem with the analytical results computed using \eqref{eq:cGS0} and \eqref{eq:cGS1}. \figref{fig:T1DM} shows the analytical net number of glucose molecules detected by the CGM located at $\dv_{\CGM}=[4,4,4]\mu$m using the result in \eqref{eq:CGM} for different insulin injection times. Our results indicate a close agreement between the analytical and simulation results and thus validate our theoretical analysis.

\figref{fig:nvalue} reveals that when the summation reaches $n=2$, the analytical curve converges. Thus, in the following simulations, $n$ is set as $\{0,1,2 \}$ to ensure the accuracy of the analysis and to avoid high computational complexity. \figref{fig:glucose} shows that a larger number of glucose molecules are received by CGM for state $S_0$  compared to state $S_1$. This difference arises from the fact that the reflective membrane of the cells in state $S_0$ results in the molecules being less likely to be removed through absorption, thereby increasing their likelihood of reaching the CGM.

In \figref{fig:T1DM}, we see that insulin injections lead to a decrease in the number of glucose molecules detected by the CGM compared to when insulin is not injected into the system. We also observe that insulin needs to be injected before the glucose reaches its peak to achieve effective blood sugar control, i.e., the glucose concentration does not exceed the threshold value at any time. We can further see that it takes time for the glucose level to start decreasing after the injection of insulin. This delay is caused by the delay in the reception of insulin and the GLUT4 translocation, which occurs approximately at $\tau_{\G_{1}}+\tau_{\G_{2}}$. 

\subsection{Optimization of Insulin Injection}
In this subsection, we first validate the analysis for optimal insulin injection provided in \secref{sec:opt} and then utilize it to examine the impact of different parameters on the objective function value $f(t_{0_{\I}})$  (i.e., the absolute value of the difference between the glucose peak and the threshold $\phi$) and the ideal insulin injection time window $\tau_{0_{\I}}$. We set $\phi = 3.325$mM as the recommended sugar level is advised to remain below 140mg/dL.\footnote{The typical basal blood sugar level rests at 70-90mg/dL and the recommended maximum sugar level is advised to be stopped below 140mg/dL\cite{zhou2009reference,park2013fasting}. Thus, the rise in blood sugar level after a meal should not exceed 60mg/dL, i.e., $\phi \leq 3.325$mM.}

\figref{fig:Optimization} shows simulation results, the analytical results from \eqref{eq:ft0} and \eqref{eq:tI*}, and the numerical results obtained with \alref{al:GDA} to validate and examine the impact of the special and general cases presented respectively in Sections~\ref{sec:opcase} and \ref{sec:generalOpins} on the objective function value and the ideal insulin injection time window. The close agreement between analytical and simulation results for the special case as well as between numerical and simulation results for the general case, validates our analytical approach and affirms the effectiveness of the algorithm. 

We further observe that the upper bound of insulin injection time $\hat{t}_{0_{\I}}$ is achieved when the objective function value is minimized, i.e., $f(t_{0_{\I}})=0$. To illustrate this, we plot the ideal insulin injection time window $\tau_{0_{\I}}$ by using \eqref{eq:tauI0} and the computed upper bound of insulin injection time $\hat{t}_{0_{\I}}$ (shown by the vertical bars). When the blood vessel is take into account (general case), the insulin injection time window grows. This occurs due to the reduced glucose availability in the system, which requires additional time for the peak concentration to exceed the threshold.

While we observe a close agreement of $\tau_{0_{\I}}$ for the general case, we also observe that there is a small error between the simulated $\tau_{0_{\I}}$ and the analytical $\tau_{0_{\I}}$ for the special case. This is mainly due to the fact that the closed-form solutions shown in \eqref{eq:ft0} and \eqref{eq:tI*} were obtained based on multiple simplifications of the environment and approximations of the analysis, consequently introducing errors.

In Table III, we further investigate the impact of the approximations in \textbf{Theorems~\ref{th:3}} and \textbf{\ref{th:4}} on the accuracy of the upper bound $\hat{t}_{0_{\I}}$. \tabref{tab:opError} shows $\hat{t}_{0_{\I}}$ and the computed percentage errors for the special and general cases with respect to the simulated result. We observe a close agreement of upper bounds for the numerical results (i.e., \alref{al:GDA}) and simulations. We further notice that the percentage error increases with the number of cells. But, we also observe that the algorithm results seem to be less influenced by the number of cells compared to the analytical results, which further confirms the impact of the simplification made to arrive at the analytical results.


\begin{figure}[t]
    \centering
    \includegraphics[width=\linewidth]{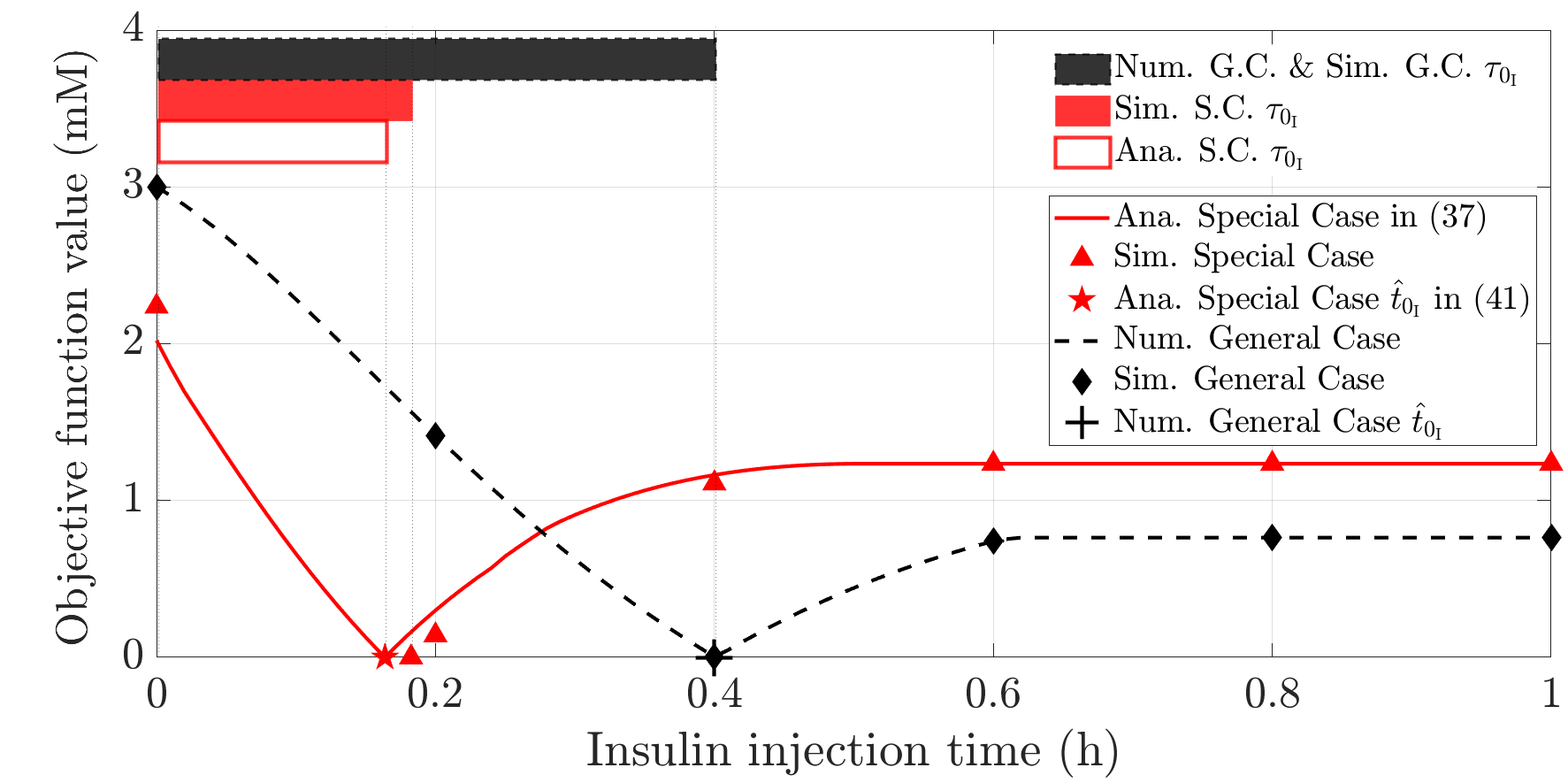}
    \caption{The objective function value and the corresponding ideal insulin injection time window (vertical bars) for special and general cases.}
    \label{fig:Optimization}
\end{figure}

\begin{table}
    \centering
    \caption{The accuracy of analytical and numerical optimization compared to particle-based simulation.}
    \begin{tabular}{c|c|c|c|c|c|c}
    \hline
     No.    & \multicolumn{3}{c|}{Special Case} & \multicolumn{3}{c}{General Case} \\\cline{2-7}
      of   & Sim.      & Ana.     & Error     & Sim.      & Num.     & Error     \\
        cells & (h)      & (h)     &\%     & (h)      & (h)     & \%     \\\hline\hline
2& 0.163    & 0.164   & 0.61    & 0.366    &0.367    &0.27          \\
10& 0.183    & 0.164   & 10.4    &  0.400   &0.401    &0.25          \\\hline
\end{tabular}
    \label{tab:opError}
\end{table}


In \figref{fig:DrugIns} and \figref{fig:MealIns}, we utilize the proposed optimization framework to examine the impact of different parameters on glucose regulation.

\subsubsection{Insulin Dispersion}\label{sec:DrugIns}
\begin{figure}[t]
    \centering
    \includegraphics[width=\linewidth]{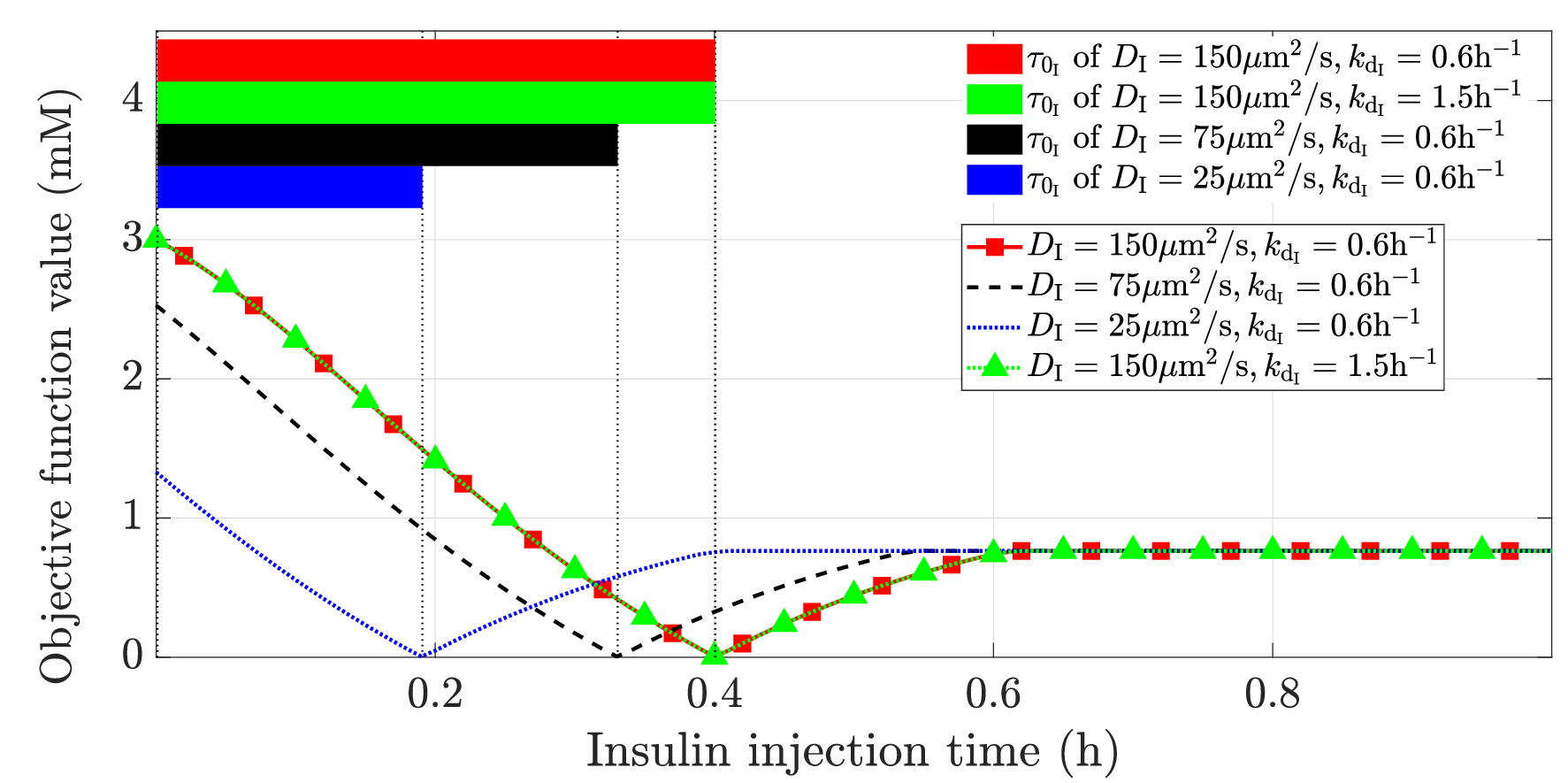}
    \caption{The impact of insulin dispersion on the objective function value and the corresponding ideal insulin injection time window.}
    \label{fig:DrugIns}
\end{figure}

Exploiting \alref{al:GDA}, \figref{fig:DrugIns} plots the ideal time window for insulin injection for different diffusion coefficients and degradation rates of insulin, which can correspond to different types of insulin, e.g., rapid-acting, short-acting, and long-acting insulin \cite{modi2007diabetes}. It is evident that a lower diffusion rate shortens the insulin injection window after an impulsive glucose input. This is due to the fact that insulin with a lower diffusion rate takes a longer time to spread and reach the cell. Therefore, a higher diffusion rate is preferred for impulsive glucose input to give patients more flexibility in injecting insulin after their meals. Interestingly, the degradation rate of insulin is not notably affected by blood sugar control. This is possibly due to the fact that the GLUT4 transporter remains functional on the cell membrane longer than the duration for which the glucose molecules are available in the environment, i.e., the glucose has been used up before the effect of insulin degradation becomes relevant. 

\subsubsection{Glucose Emission}\label{sec:MealIns}
\begin{figure}
    \centering
    \includegraphics[width=\linewidth]{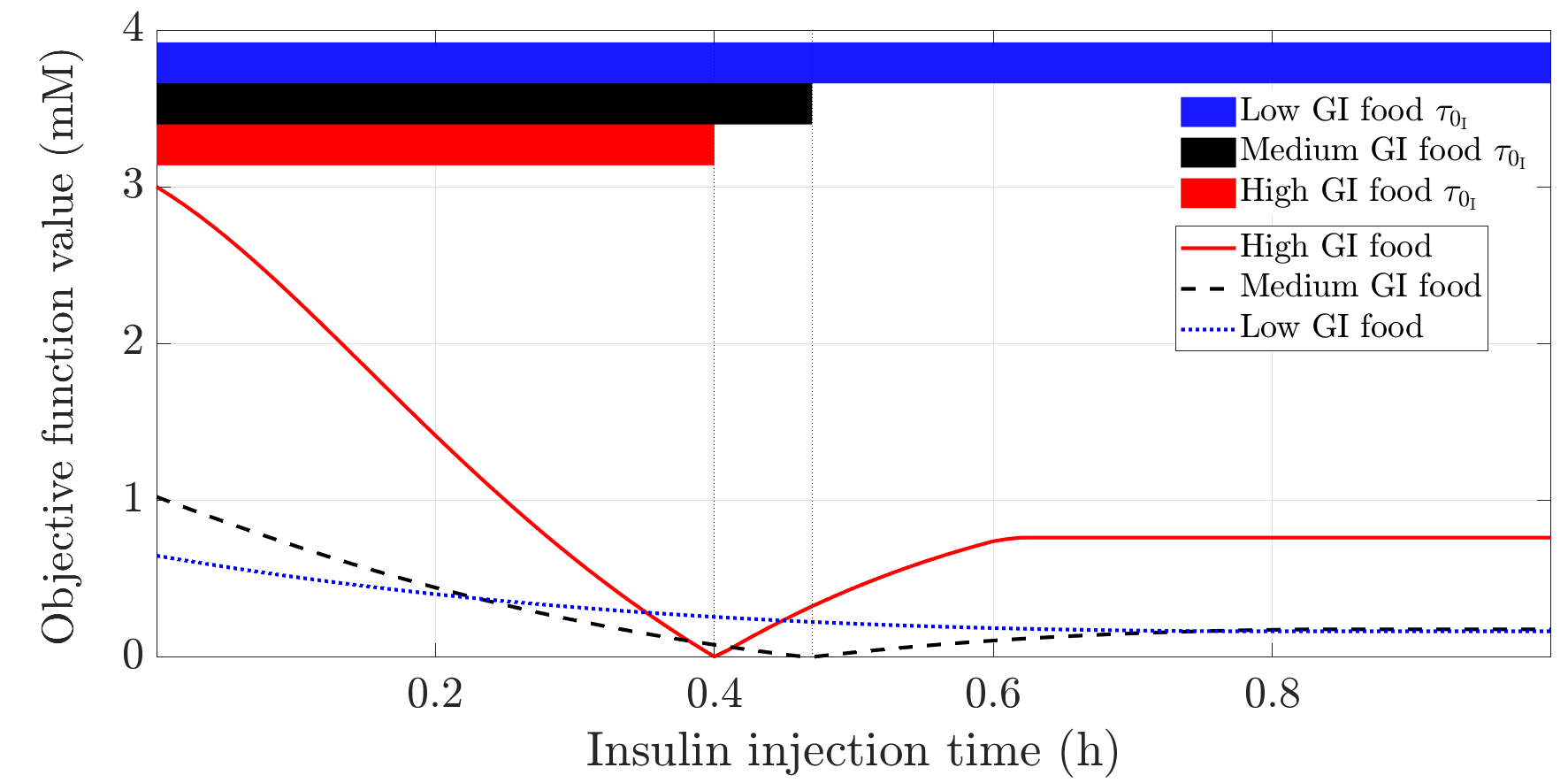}
    \caption{The impact of the glucose emission profile on the objective function value and the corresponding ideal insulin injection time window.}
    \label{fig:MealIns}
\end{figure}

\figref{fig:MealIns} examines the impact of glucose emission into the subcutaneous layer on the ideal window for insulin injection.\footnote{Glucose emission into the subcutaneous layer is linked to the glycemic-index (GI) of the food taken. Prioritizing low GI foods and adhering to dietary guidelines supports stable blood sugar levels and sustained energy throughout the day, especially when combined with regular exercise, aiding effective diabetes management \cite{gonzalez2023evolutionary}.} For simplicity, we model the glucose emission into the subcutaneous layer from a meal as \begin{align}
    F_{\G}(d_{0_{\G}},t) = A\rect\left(\frac{t}{W}-0.5\right)\delta(\dv-\dv_{0_{\G}}),
    \end{align}
where $A$ represents the amplitude of glucose concentration released, $\rect(\cdot)$ denotes the rectangular function, and $W$ determines the duration of glucose release into the system. 

\figref{fig:MealIns} shows numerical results obtained from \alref{al:GDA} for three different meal profiles having the same number of glucose molecules: high GI food ($A = 16.6$M, $W \rightarrow 0$)\footnote{As $W\rightarrow 0$, the rectangular function is converted into a direct delta function \cite{khare2015fourier}.}, medium GI food ($A = 8.30$M, $W =0.5$), and low GI food ($A = 4.15$M, $W =1$). We observe that the higher the GI of the food the patient eats, the shorter the insulin injection time window. We further observe that the objective function value reaches saturation at around 0.8h when eating low GI food; hence, the minimum objective function value occurs at $t_{0_{\I}}>0.8$h. Therefore, the upper bound for the insulin injection time does not exist for low GI food, i.e., $\hat{t}_{0_{\I}}\rightarrow\infty$. This suggests that by eating low GI food, the patient can potentially prevent the need for insulin, which confirms that proper meal management can be very effective in treating diabetes.


\section{Conclusion\label{sec:concl}}
In this paper, we proposed a model for the insulin-glucose interaction in the subcutaneous layer of a T1DM patient and modeled it as a multicellular MC system. For simplicity of the analysis, we established an analytical framework that divides the T1DM system into insulin and glucose subsystems. We then used Green’s second identity to reformulate the considered problem into a BIE and through the use of the Adomian decomposition method, the insulin and glucose concentrations in the T1DM system were derived. Simulation results obtained from an agent-based simulator showed a close match between analytical and simulation results; thus, confirming the validity of our theoretical analysis. We then utilized optimization methods to derive the most effective window for injecting insulin to manage the postprandial glucose levels of T1DM patients. Our optimization also allowed us to investigate the impact of different types of insulin and meal management strategies on glucose regulation. 

Our future work involves not only refining the mathematical framework by incorporating the insulin dissociation kinetics, basal parameters, and the insulin reception mechanism inside a cell, but also addressing physiological variability, dietary complexities, and long-term implications to enhance the model's real-world relevance. Thus, by translating these future findings into practical applications, i.e., developing personalized treatment protocols or integrating automated insulin delivery systems, our work has the potential to bridge the gap between theoretical modeling and clinical implementation.


\appendices
\allowdisplaybreaks

\section{Proof of \thref{th:BIEsolution}}\label{ap:2}
 We use the Adomian decomposition method \cite{rahman2007integral} and express $C_{\I}(\dv',\omega|\dv_{0_{\I}},t_{0_{\I}})$ in the form of a series as \begin{align}
        C_{\I}(\dv',\omega|\dv_{0_{\I}},t_{0_{\I}}) = \sum_{n=0}^{\infty} C_n(\dv',\omega|\dv_{0_{\I}},t_{0_{\I}}),\label{eq:ADM1}
    \end{align} where $C_0(\dv',\omega|\dv_{0_{\I}},t_{0_{\I}})$ is the term outside the integral \eqref{eq:generalBIE}, i.e., $C_0(\dv',\omega|\dv_{0_{\I}},t_{0_{\I}}) = H_{\I}(\dv_{0_{\I}},\omega|\dv')e^{-j\omega t_{0_{\I}}}$. Substituting \eqref{eq:ADM1} into \eqref{eq:generalBIE}, we can express the BIE in the form of a series as follows\begin{align}
        &\sum_{n=0}^{\infty}C_n(\dv',\omega|\dv_{0_{\I}},t_{0_{\I}})=\sum_{i=N_b^R + 1}^{N_b}\oint_{\parOi} H_{\I}(\dv,\omega|\dv') \nonumber\\&\nabla_{\dv} \bigg\{\sum_{n=0}^{\infty} C_n(\dv,\omega|\dv_{0_{\I}},t_{0_{\I}})\bigg\}\cdot\nv_i\mathrm{d}\mathcal{S} + H_{\I}(\dv_{0_{\I}},\omega|\dv')e^{-j\omega t_{0_{\I}}} \nonumber\\&+\sum_{i=1}^{N_b^R}\oint_{\parOi} \bigg\{\sum_{n=0}^{\infty} C_n(\dv,\omega|\dv_{0_{\I}},t_{0_{\I}})\bigg\}[k_{\ai}H_{\I}(\dv,\omega|\dv')\nonumber\\&-D\nabla_{\dv} H_{\I}(\dv,\omega|\dv')\cdot\nv_i] \mathrm{d}\mathcal{S}.\label{eq:ADM-BIE}
    \end{align}

    Assuming that the sum converges absolutely to $C_{\I}(\dv',\omega|\dv_{0_{\I}},t_{0_{\I}})$, we can interchange the summation and the integral in \eqref{eq:ADM-BIE}. Then, by expanding the summation on both sides and associating each $C_n(\dv',\omega|\dv_{0_{\I}},t_{0_{\I}})$ in a recursive manner, we obtain $C_{n}(\dv',\omega|\dv_{0_{\I}},t_{0_{\I}})$ as shown in \eqref{eq:ADM-BIE4} on the top of the next page.
    
    To obtain the solution in the time domain, we take the inverse Fourier transform of \eqref{eq:ADM1} and \eqref{eq:ADM-BIE4}. Then, by applying the meshing and rectangular rule \cite{zoofaghari2021semi} to resolve the surface integral, the solution in the time domain is given by \eqref{eq:BIEsolution}. Lastly, by replacing $\dv'$ with $\dv$, we arrive at \eqref{eq:BIEsolution}-\eqref{eq:cngeneral}.
    \begin{figure*}[!t]
    \begin{align}
        \begin{dcases}
            C_0(\dv',\omega|\dv_{0_{\I}},t_{0_{\I}}) =& H_{\I}(\dv_{0_{\I}},\omega|\dv')e^{-j\omega t_{0_{\I}}} \\
            C_n(\dv',\omega|\dv_{0_{\I}},t_{0_{\I}}) =& \sum_{i=1}^{N_b^R}\oint_{\parOi} C_{n-1}(\dv',\omega|\dv_{0_{\I}},t_{0_{\I}})[k_{\ai}H_{\I}(\dv,\omega|\dv')-D\nabla_{\dv} H_{\I}(\dv,\omega|\dv')\cdot\nv_i] \mathrm{d}\mathcal{S} \\&+\sum_{i=N_b^R + 1}^{N_b}\oint_{\parOi} H_{\I}(\dv,\omega|\dv') \nabla_{\dv}  C_{n-1}(\dv',\omega|\dv_{0_{\I}},t_{0_{\I}})\cdot\nv_i\mathrm{d}\mathcal{S}, \;\;\; i\geq 1
        \end{dcases}\label{eq:ADM-BIE4}
    \end{align}
    \vspace{-.5cm}
    \end{figure*} 

\section{Proof of \thref{th:3}}\label{ap:3}
The peak time for the glucose concentration either appears at the time that the system transits to state $S_1$ or at the peak time of state $S_0$, i.e., $\min{(t_{\pk}^{S_0},t_{0_{\I}}+\tau)}$. This can be written mathematically as in \eqref{eq:tpeak}.

Then, we can obtain $t_{\pk}^{S_0}$ by solving the maximization by equating the first-order derivative of \eqref{eq:cglucoseS0reduced} to zero. To obtain a closed-form solution, we resort to approximating $\frac{a}{t^{n/2}}\exp(-\frac{b}{t})$ as $\frac{a}{t^{\lceil 0.7n\rceil}}\exp(-b)$. Due to the exponential decay and term $b$ being larger than $t$, we can take the dominate term as the approximation, i.e., $\exp(-\frac{b}{t})$ can be approximated as $\exp(-b)$. Then, to compensate for the term $\exp(\frac{1}{t})$ and allow for factoring into a cubic equation, the term $\frac{a}{t^{n/2}}$ is mapped onto $\frac{a}{t^{\lceil 0.7n\rceil}}$. As such, $\frac{a}{t^{5/2}}\exp{(-\frac{b}{t})}$, $\frac{a}{t^{7/2}}\exp{(-\frac{b}{t})}$, $\frac{a}{t^{4}}\exp{(-\frac{b}{t})}$, and $\frac{a}{t^{5}}\exp{(-\frac{b}{t})}$ is approximated as $\frac{a\exp{(-b)}}{t^{4/3}}$, $\frac{a\exp{(-b)}}{t^{5/3}}$, $\frac{a\exp{(-b)}}{t^{6/3}}$, and $\frac{a\exp{(-b)}}{t^{7/3}}$, respectively, which leads to \begin{align}
    \dfrac{\mathrm{d}}{\mathrm{d}t_{0_{\I}}}c_{\G}^{S_0}(\dv,t|\dv_{0_{\G}},0) &\approx \dfrac{A_{1}t^{3/3}+A_{2}t^{2/3}+A_{3}t^{1/3}-A_{4}}{t^{7/3}}.\label{eq:ddtIcGs0}
\end{align} Note that the error of this approximation is negligible in the case that $a\exp{(-b)}$ is the dominating term. Setting \eqref{eq:ddtIcGs0} to zero is equivalent to solving the cubic equation $A_{1}t^3+A_{2}t^2+A_{3}t-A_{4}=0$. Consequently, solving for $t$ in this equation will yield $t_{\pk}^{S_0}$ as given in \eqref{eq:tpeakS0}.

\section{Proof of \thref{th:4}}\label{ap:4}
\textbf{Case 2} implies that the solution to the optimization problem in \eqref{eq:opProb} is found by solving $\frac{\partial f(t_{0_{\I}})}{\partial t_{0_{\I}}} = 0$. However, $f(t_{0_{\I}})$ is non-convex due to $t_{\pk}=t_{\pk}^{S_0}$ when $t_{\pk}^{S_0} \leq t_{0_{\I}}+\tau$, resulting in a saddle point when $t_{0_{\I}} \geq t_{\pk}^{S_0} - \tau$. Based on this, the following inferences can be made:\begin{itemize}
    \item When $c_{\G}^{S_0}(\dv_{\CGM},t_{\pk}^{S_0}|\dv_{0_{\G}},0)<\phi$, the glucose concentration observed by the CGM consistently remains below the threshold, rendering insulin injection unnecessary.
    \item When $c_{\G}^{S_0}(\dv_{\CGM},t_{\pk}^{S_0}|\dv_{0_{\G}},0)=\phi$, the peak concentration time represents the maximum allowable time for insulin injection before the glucose concentration surpasses the threshold. Thus, the optimal injection time becomes $\tau_{0_{\I}}=t_{\pk}^{S_0}-\tau$.
    \item When $c_{\G}^{S_0}(\dv_{\CGM},t_{\pk}^{S_0}|\dv_{0_{\G}},0)>\phi$, the minimum occurs at $t_{\pk} = t_{0_{\I}}+\tau$.
\end{itemize}
For the last case, we determine $\frac{\mathrm{d}}{\mathrm{d}t_{0_{\I}}}|c_{\pk}(t_{0_{\I}}+\tau)-\phi|=0$ to solve the optimization problem. Employing the property of absolute value, $\frac{\mathrm{d}}{\mathrm{d}t_{0_{\I}}}|c_{\pk}(t_{0_{\I}}+\tau)-\phi|$ can be expressed as ${\frac{\mathrm{d}}{\mathrm{d}t_{0_{\I}}}c_{\pk}(t_{0_{\I}}+\tau)}{c_{\pk}(t_{0_{\I}}+\tau)-\phi}$. Introducing variable $u=t_{0_{\I}}+\tau$ reduces the complexity, resulting in having to solve ${\frac{\mathrm{d}}{\mathrm{d}u}c_{\pk}(u+\tau)}{c_{\pk}(u+\tau)-\phi}=0$.

Similar to \thref{th:3}, we approximate not only $\frac{a}{t^{5/2}}\exp{(-\frac{b}{t})}$, $\frac{a}{t^{7/2}}\exp{(-\frac{b}{t})}$, $\frac{a}{t^{4}}\exp{(-\frac{b}{t})}$, and $\frac{a}{t^{5}}\exp{(-\frac{b}{t})}$, but also $\frac{a}{t^{3/2}}\exp{(-\frac{b}{t})}$ and $\frac{a}{t^{3}}\exp{(-\frac{b}{t})}$ as $\frac{a\exp{(-b)}}{t}$ and $\frac{a\exp{(-b)}}{t^{2}}$, respectively. Consequently, this leads to rewriting the problem as $(A_{1}u+A_{2}u^{2/3}+A_3u^{1/3}+A_4)(B_{1}u+B_{2}u^{2}-\phi)=0.$ Solving the cubic part of the equation results in $u=t_{\pk}$, which, after manipulation, yields the solution for the case when $c_{\G}^{S_0}(\dv_{\CGM},t_{\pk}^{S_0}|\dv_{0_{\G}},0)=\phi$. Subsequently, solving the quadratic part of the equation leads to the final case in \eqref{eq:tI*}.

\singlespacing
\bibliographystyle{IEEEtran}

\end{document}